\documentclass[10pt]{article}

\usepackage{amsmath,amssymb,amsthm}
\usepackage{mathtools}

\usepackage{graphicx}
\usepackage{booktabs}
\usepackage{array}
\usepackage[dvipsnames]{xcolor}

\usepackage{caption}

\usepackage{hyperref}
\usepackage[capitalize,nameinlink]{cleveref}

\usepackage[backend=biber,style=numeric-comp,sorting=none,giveninits=true,sortcites=true,isbn=false]{biblatex}
\DeclareSourcemap{
  \maps[datatype=bibtex]{
    \map{
      \step[fieldsource=doi, match=\regexp{\A10\.48550/}, final]
      \step[fieldset=doi, null]
    }
    \map{
      \step[fieldsource=archiveprefix, match=\regexp{\A(Biophysics|Bioinformatics)\Z}, final]
      \step[fieldset=archiveprefix, null]
    }
    \map{
      \step[fieldsource=publisher, match=\regexp{\A(Biophysics|Bioinformatics)\Z}, final]
      \step[fieldset=publisher, null]
    }
    \map{
      \step[fieldsource=primaryclass, match=\regexp{\ANew Results\Z}, final]
      \step[fieldset=primaryclass, null]
    }
    \map{
      \step[fieldsource=chapter, match=\regexp{\ANew Results\Z}, final]
      \step[fieldset=chapter, null]
    }
    \map{
      \step[fieldsource=pages, match=\regexp{\A\d{4}\.\d{2}\.\d{2}}, final]
      \step[fieldset=pages, null]
    }
    \map{
      \step[fieldsource=number, match=\regexp{\AarXiv:}, final]
      \step[fieldset=number, null]
    }
    \map[overwrite]{
      \step[fieldsource=archiveprefix, matchi=\regexp{\Aarxiv\Z}, final]
      \step[fieldset=pubstate, fieldvalue={preprint}]
    }
    \map[overwrite]{
      \step[fieldsource=doi, match=\regexp{\A10\.(1101|64898)/}, final]
      \step[fieldset=pubstate, fieldvalue={preprint}]
    }
  }
}
\NewBibliographyString{preprint}
\DefineBibliographyStrings{english}{preprint = {preprint}}
\renewbibmacro{in:}{}
\AtEveryBibitem{%
  \clearfield{issn}%
  \clearfield{urlyear}\clearfield{urlmonth}\clearfield{urlday}%
  \ifboolexpr{ test {\iffieldundef{doi}} and test {\iffieldundef{eprint}} }
    {}%
    {\clearfield{url}}%
}

\theoremstyle{plain}
\newtheorem{theorem}{Theorem}[section]
\newtheorem{lemma}[theorem]{Lemma}
\newtheorem{proposition}[theorem]{Proposition}
\newtheorem{corollary}[theorem]{Corollary}

\theoremstyle{definition}

\newtheorem{assumption}[theorem]{Assumption}

\theoremstyle{remark}
\newtheorem{remark}[theorem]{Remark}

\newcommand{\R}{\mathbb{R}}
\newcommand{\diff}{\mathrm{d}}
\newcommand{\LL}{\mathcal{L}}

\newcommand{\AAop}{\mathcal{A}}

\definecolor{okabeBlue}{HTML}{0072B2}
\definecolor{okabeOrange}{HTML}{E69F00}
\definecolor{okabeGreen}{HTML}{009E73}
\definecolor{okabeVermillion}{HTML}{D55E00}
\definecolor{okabePurple}{HTML}{CC79A7}
\definecolor{okabeSky}{HTML}{56B4E9}
\definecolor{tableGray}{gray}{0.35}
\newcommand{\cmark}{\textcolor{okabeGreen}{\ensuremath{\checkmark}}}
\newcommand{\xmark}{\textcolor{okabeVermillion}{\ensuremath{\times}}}

\usepackage{formatting}

\title{Identifiability Limits of Physics-Informed Inference for Spatial Stochastic Dynamics from Static Snapshots}

\author{\authorblock{%
\authorentry{Rujie (Rebecca) Gu}{Department of Mathematics, University of California, Irvine}{rujieg@uci.edu}
\authorentry[0000-0002-0410-5905]{Ray Zirui Zhang}{Department of Mathematical Sciences, Worcester Polytechnic Institute}{rzhang8@wpi.edu}
\authorentry*[0000-0001-5494-403X]{Christopher E.\ Miles}{Department of Mathematics, University of Utah}{chris.miles@utah.edu}
}}

\date{}

\begin{document}

\maketitle

\begin{abstract}
Despite increasing scale and resolution, many biological measurements remain destructive, revealing only spatial information rather than the dynamics it encodes. By combining flexible representations with mechanistic constraints, physics-informed machine learning offers a promising route to inferring these dynamics from static snapshots.
Motivated by subcellular imaging of gene expression, we ask when a static spatial pattern of molecules can identify spatially varying diffusivity, creation, destruction, and boundary exchange, and how different inference schemes perform on the task.
A structural identifiability analysis shows that distributed sources are non-identifiable, whereas a point source such as a transcription site can restore identifiability. These limits are further shaped by seemingly innocuous modeling choices: the boundary conditions, the spatial regularity of the underlying dynamics, and even the stochastic calculus convention.
We then adapt several physics-informed schemes, differing in how they represent the solution and enforce the governing equations, and demonstrate effective inference from a single snapshot.
Physics-informed approaches can thus recover spatial heterogeneities of biological dynamics from static data, but their use should be accompanied and guided by careful identifiability analysis for meaningful interpretation of the results.
\end{abstract}


\section{Introduction}
\label{sec:introduction}

\subsection{Motivation}

Many experimental techniques for studying biological dynamics are destructive: they fix or lyse the sample and therefore provide only a static snapshot. This is the case for much single-molecule and single-cell imaging, which records the positions or quantities of molecules but not the processes that produced them. It is then natural to ask how much of those dynamics, and their spatial dependence, can be recovered from the snapshot alone.

One such example of static spatial data that encodes underlying dynamics of interest is imaging of subcellular RNA. Techniques such as single-molecule fluorescence in situ hybridization (smFISH) \cite{rajImagingIndividualMRNA2008} and multiplexed variants such as MERFISH \cite{chenSpatiallyResolvedHighly2015} and seqFISH \cite{engTranscriptomescaleSuperresolvedImaging2019} measure precise spatial coordinates $\{x_1, \dots, x_N\}$ of individual messenger RNA (mRNA) molecules in fixed cells. The missing dynamics that generate these spatial patterns encode transcription, degradation, boundary export, and transport through the nuclear interior \cite{novoselsky2026SubcellularMRNALocalization}. These dynamics are inherently spatially varying, as the nucleus is organized into chromatin domains of varying compaction, nuclear bodies such as speckles \cite{wuDynamicsRNALocalization2024}, and interchromatin channels, structures that shape how mRNA moves from transcription sites toward export and where processing steps such as splicing occur \cite{lecuyerGlobalAnalysisMRNA2007, buxbaumRightPlaceRight2015, dingConstitutiveSplicingEconomies2019}. The diffusion coefficient $D(x)$ is one such spatially varying quantity, summarizing how the surrounding chromatin organization shapes mRNA motion, so a map of it would offer a functional counterpart to structural measurements of chromatin architecture \cite{vargasMechanismMRNATransport2005}. The difficulty is that the relevant organization cannot be measured exhaustively in the same fixed cell, so these spatially varying quantities must be inferred from the observed positions rather than measured directly.

Physics-informed machine learning builds scientific knowledge, such as governing equations or conservation laws, directly into a model rather than learning from data alone \cite{karniadakisPhysicsinformedMachineLearning2021}. Embedding these constraints lets such models fit flexible neural representations to sparse and noisy data, and they have become widely used across computational science \cite{cuomoScientificMachineLearning2022}. They have since been applied across a range of inference settings, including stochastic particle systems \cite{chen2021SolvingInverseStochastic}, reaction-diffusion models \cite{raoEncodingPhysicsLearn2023, lagergrenBiologicallyinformedNeuralNetworks2020, laveryPhysicsInformedNeuralNetworks2026}, and other biological applications \cite{danekerSystemsBiologyIdentifiability2023}, where it is sometimes called ``biologically-informed'' instead. Their flexibility and mechanistic constraints make them natural candidates for recovering dynamics from static data, where the constraints compensate for the missing temporal information \cite{NEURIPS2025_af6c0a1b}. The practice and limits of this approach, however, remain underdeveloped. A method will always return an estimate, but whether a static snapshot determines the dynamics that produced it is considered far less frequently.

We study this question for stochastic particles undergoing birth, diffusion, and first-order degradation, observed at steady state through a single point pattern. We build on spatial point-process models \cite{miles2024InferringStochasticRates, miles2025IncorporatingSpatialDiffusion} of nuclear mRNA in which the observations exactly follow a spatial Poisson process, with intensity $u(x)$ generated by the dynamics, so that a single snapshot jointly encodes the stochastic number and positions of the particles. This prior work inferred reaction and transport parameters with the diffusion coefficient specified or spatially homogeneous. Here the diffusivity $D(x)$, source $B(x)$, degradation $\gamma(x)$, and boundary exchange are not all known in advance; in the analysis, degradation heterogeneity is absorbed into effective transport and source terms, and we ask what a single realization of the snapshot can determine about these unknowns (\cref{fig:cartoon}). This sets two aims for the paper. First, we establish when such reconstruction is possible at all, characterizing which combinations of dynamics can be identified. Second, we compare several physics-informed schemes to see whether these unknowns can be recovered reliably in practice beyond theoretical identifiability.

\begin{figure}[tb]
    \centering
    \includegraphics[width=0.75\linewidth]{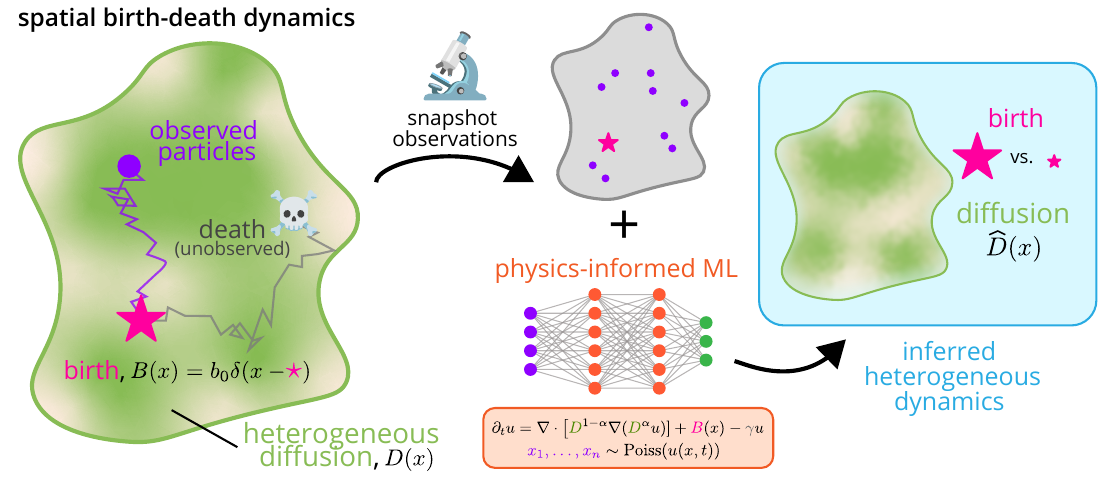}
    \caption{Schematic of the inverse problem. Particles are produced by a source $B(x)$, move with spatially varying diffusivity $D(x)$, and degrade. At steady state they are observed only through a static snapshot of their positions, which form a spatial Poisson process with intensity $u(x)$, random in both number and location. The inverse problem is to recover features of the underlying dynamics, such as $D(x)$ and the source, from a single snapshot using physics-informed machine learning methods.}
    \label{fig:cartoon}
\end{figure}

\subsection{Prior work}

\paragraph{Trajectory-based inference.}
When particles can be tracked over time, as in single-particle tracking experiments, a range of methods resolve spatial heterogeneity in the transport parameters  \cite{lin2025SuperresolvingParticleDiffusion}. Modern approaches span hidden Markov models \cite{slator2015DetectionDiffusionHeterogeneity}, Gaussian processes for $D(x)$ \cite{kumar2025DiffMAPGPContinuous2D}, and deep-learning inference of pointwise diffusivity \cite{requena2023InferringPointwiseDiffusion}. All of these fundamentally rely on observing displacements over known time intervals, and such trajectory data are not always available. Many high-throughput imaging modalities require fixation, or photobleaching limits track lengths. The question of what can be learned from \emph{snapshot data alone} has received comparatively little systematic attention.

\paragraph{Dynamics from static measurements.}
More broadly, extracting dynamics from static or cross-sectional data is now a broad effort across quantitative biology. Many experimental modalities (single-cell sequencing, spatial transcriptomics, tissue-level proteomics) produce snapshots rather than time series, and a range of methods infer temporal quantities from them, such as rates \cite{somer2024TemporalTissueDynamics, zhang2025InferringStochasticDynamics}, velocities \cite{lamanno2018RNAVelocitySingle}, and developmental trajectories \cite{schiebingerOptimaltransportAnalysisSinglecell2019}. Alongside these methodological developments, complementary work has pursued understanding what such snapshots can and cannot determine \cite{cinquemani2018IdentifiabilityReconstructionBiochemical, guan2024IdentifyingDriftDiffusion, simpson2026WhenTrajectoriesMatter}. One emergent theme of these works, notably \cite{weinreb2018FundamentalLimitsDynamic}, is that some features of expression dynamics cannot be recovered from a static snapshot alone. However, that analysis, and most tools, are in expression-state space rather than physical quantities. The spatially resolved version, with physical transport parameters as the unknowns, remains largely open, and is the question we take up here.

\paragraph{PDE-constrained inverse problems.}
The PDE inverse problems literature offers relevant mathematical tools but addresses a different data regime. Classical results, such as those for Calder\'on's problem in electrical impedance tomography \cite{uhlmannElectricalImpedanceTomography2009} and non-uniqueness in diffusion-based optical tomography \cite{arridge1998NonuniquenessDiffusionbasedOptical}, concern recovery of interior coefficients from boundary data, typically under Gaussian noise assumptions. Identifiability of PDE coefficients has been analyzed in other specific settings \cite{bachmayr2019IdentifiabilityDiffusionCoefficients, ciocanel2024ParameterIdentifiabilityPDE}, but recovering an entire field $D(x)$ from interior Poisson observations, whose randomness is generated by the dynamics rather than added as measurement noise, differs from both and has received less attention from either community.

\subsection{Contributions}

We start from a heterogeneous birth--diffusion--degradation model in which the diffusivity, degradation rate, and source each vary in space, with the degradation heterogeneity absorbed into other terms via scaling.

We make three contributions:
\begin{itemize}
\item We prove identifiability conditions for steady-state birth-diffusion-degradation models with spatially heterogeneous coefficients, showing how recovery of $D(x)$ and the source depends on the boundary condition and on the stochastic interpretation (It\^o, Stratonovich, or Fickian). These cover the single-source cases with known boundaries and the non-identifiability of an unknown Robin boundary.
\item We adapt three different physics-informed procedures (DTO, PINN, and BiLO) to the singular-source likelihood. They vary in how they represent the solution and enforce the PDE and boundary conditions but share a variable-projection update that eliminates the source strength. Across the identifiable test cases, recovery captures the spatial profile of $D(x)$ more faithfully than its overall magnitude.
\item When a single snapshot is insufficient, additional measurements or richer structure can break the remaining ambiguity. We provide proof sketches for three further setups in which the parameters remain identifiable despite an unknown boundary permeability: multiple point sources with shared intensity, a downstream species sourced by an observed precursor, and two distinct steady states.
\end{itemize}

\section{Preliminaries}
\label{sec:framework}

\subsection{Observation model and steady-state equation}

We now connect the point process observation to the steady-state field. Let $\Omega \subset \R^d$ denote the spatial domain (e.g., the cell nucleus), and let $X = \{x_1, \dots, x_N\}$ denote the observed particle positions. Both the number $N$ and the positions $\{x_i\}$ are random variables.

For linear reaction networks involving first-order birth, death, and diffusion, the Poisson representation gives an exact reduction. When the probability distribution is written as a mixture of independent Poisson distributions across voxels, the system collapses to a deterministic intensity field $u(x)$ satisfying the reaction-diffusion PDE derived from the Chemical Master Equation, a correspondence formalized through spatio-temporal Cox processes \cite{schnoerrCoxProcessRepresentation2016}. At steady state ($\partial_t u = 0$), this intensity field satisfies the elliptic PDE:
\begin{equation}
\label{eq:steady-state}
\LL u(x) - \gamma u(x) + B(x) = 0 \quad \text{in } \Omega,
\end{equation}
subject to boundary conditions on $\partial \Omega$.

\begin{remark}[Nondimensionalization]
\label{rem:nondim}
Throughout the proofs we work on a general interval $[0,L]$ and take $\gamma>0$ to be constant. Rescaling space by $x\mapsto x/L$ and time by $t\mapsto\gamma t$ removes the domain length and degradation timescale, so these absolute scales do not affect the identifiability arguments. The unknowns are $D(x)$, $B(x)$, and, when present, the boundary permeability $\kappa$. Holding $\gamma$ constant folds any spatial variation of the degradation rate into $D$ and $B$. This is only approximate, since dividing by a variable $\gamma(x)$ does not simply rescale $D$: $\frac{1}{\gamma(x)}(D(x)u(x))'' \neq \left(\frac{D(x)}{\gamma(x)}u(x)\right)''$ unless $\gamma$ is constant. We expect the qualitative interpretation of the results below to persist for a general $\gamma(x)$.
\end{remark}

The observed particle locations $\{x_1, \dots, x_N\}$ are viewed as a realization of a spatial Poisson point process with intensity $u(x)$ \cite{mollerModernStatisticsSpatial2007}. The likelihood of observing a configuration $X$ given parameters $\theta = (D, B, \kappa)$, which determine $u$ through the PDE, is:
\begin{equation}
\label{eq:likelihood}
L(\theta; X) = \left( \prod_{i=1}^{N} u(x_i; \theta) \right) \exp\left(-\int_{\Omega} u(x; \theta) \, \diff x\right).
\end{equation}
This likelihood incorporates both the total number of particles through the integral term and their precise spatial locations through the product term.

\begin{remark}[Identifiability reduces to uniqueness of $u(x)$]
\label{rem:mle-consistency}
We analyze identifiability at the level of the intensity field $u(x)$. This is enough for structural identifiability of the parameters, because distinct intensities define distinct Poisson point-process laws. If $u_1$ and $u_2$ differ on a set of positive measure, then
\[
D_{\mathrm{KL}}(u_1 \| u_2)
= \int_\Omega \left[u_1 \log\left(\frac{u_1}{u_2}\right) - u_1 + u_2\right]\,\diff x
>0.
\]
The remaining question is whether the PDE map $\theta \mapsto u$ is injective, which is the question addressed by the ambiguity operator. The results below characterize structural identifiability \cite{villaverdeObservabilityStructuralIdentifiability2019}; practical finite-sample identifiability depends additionally on the Fisher information.
\end{remark}

\subsection{Spatially varying diffusion}

The form of the diffusion operator $\mathcal{L}$ in heterogeneous media is not uniquely determined by the statement ``particles diffuse with coefficient $D(x)$.'' Different microscopic interpretations of stochastic calculus in spatially varying environments yield different macroscopic PDEs \cite{volpeEffectiveDriftsDynamical2016, pacheco-pozo2024LangevinEquationHeterogeneous}. To analyze identifiability systematically, we introduce a one-parameter family of diffusion operators:
\begin{equation}
\label{eq:alpha-operator}
\mathcal{L}_\alpha u \equiv \nabla \cdot \left[ D(x)^\alpha \nabla \left( D(x)^{1-\alpha} u(x) \right) \right],
\end{equation}
where $\alpha \in [0, 1]$ parameterizes the ``stochastic convention.'' Expanding the divergence term for each convention reveals the structural differences in the operators:
\begin{align*}
\LL_0 u &= \Delta(Du) = D \Delta u + 2 \nabla D \cdot \nabla u + u \Delta D, && (\text{It\^o},\ \alpha=0) \\
\LL_{1/2} u &= \nabla \cdot \bigl(\sqrt{D}\,\nabla(\sqrt{D}\,u)\bigr) = D \Delta u + \frac{3}{2} \nabla D \cdot \nabla u + \frac{1}{2} u \Delta D, && (\text{Stratonovich},\ \alpha=1/2) \\
\LL_1 u &= \nabla \cdot (D \nabla u) = D \Delta u + \nabla D \cdot \nabla u. && (\text{Fickian},\ \alpha=1)
\end{align*}

The It\^o and Stratonovich expansions include a term proportional to $\Delta D$, a second derivative of the diffusion coefficient, whereas the Fickian expansion involves only first derivatives of $D$. This order difference is the source of the distinct identifiability behavior below.

From a physical perspective, the parameter $\alpha$ corresponds to distinct microscopic scenarios. Consider simulating a random walker in heterogeneous $D(x)$ where, at each timestep, we update position according to
\[
X_{n+1} = X_n + \sqrt{2 D(\star)\,\Delta t}\,\xi_n, \qquad \xi_n \sim N(0, 1).
\]
The convention specifies where the diffusion coefficient is evaluated. In the It\^o convention ($\alpha = 0$), $D$ is evaluated at the current position $X_n$, implying the particle does not anticipate the spatial field. The Stratonovich convention ($\alpha = 1/2$) evaluates $D$ at the midpoint, which preserves the standard chain rule. Fickian diffusion ($\alpha = 1$) corresponds to the macroscopic law where flux is proportional to the concentration gradient. The choice of stochastic calculus therefore encodes an assumption about the physical noise.

\subsection{Boundary conditions}

We impose boundary conditions through the physical flux associated with the chosen convention,
\begin{equation}
\label{eq:alpha-flux}
J_\alpha[u,D] \equiv -D(x)^\alpha \nabla\!\left(D(x)^{1-\alpha}u(x)\right).
\end{equation}
Robin boundary conditions then prescribe outward flux proportional to local concentration:
\begin{equation}
\label{eq:robin-bc}
D(x)^\alpha \frac{\partial}{\partial n}\!\left(D(x)^{1-\alpha}u(x)\right) + \kappa(x)u(x) = 0
\quad \text{on } \partial \Omega,
\end{equation}
where $\kappa(x)$ is the boundary permeability. In the one-dimensional setting used below, we write the endpoint permeabilities as $\kappa_0=\kappa(0)$ and $\kappa_L=\kappa(L)$; when a single $\kappa$ is written, we mean the symmetric case $\kappa_0=\kappa_L=\kappa$. This reduces to $D\partial_n u+\kappa u=0$ for Fickian diffusion and to $\partial_n(Du)+\kappa u=0$ for It\^o diffusion.

The limiting cases recover familiar boundary conditions:
\begin{itemize}
    \item \textbf{Dirichlet (absorbing):} $\kappa \to \infty$, so $u = 0$. Particles are removed upon reaching the boundary.
    \item \textbf{Neumann (reflecting):} $\kappa = 0$, so $J_\alpha \cdot \mathbf{n} = 0$. Particles are reflected at the boundary.
    \item \textbf{Robin (semi-permeable):} $0 < \kappa < \infty$, interpolating between the absorbing and reflecting limits. For nuclear export, this models finite permeability through nuclear pore complexes.
\end{itemize}

\subsection{The ambiguity operator}

We now formalize the concept of parameter ambiguity. Suppose two parameter sets $(D_1, B_1)$ and $(D_2, B_2)$, with positive diffusivities and nonnegative sources, produce the \emph{same} steady-state solution $u(x)$. Define the perturbations:
\begin{equation}
\delta D(x) = D_2(x) - D_1(x), \quad \delta B(x) = B_2(x) - B_1(x).
\end{equation}

Substituting into the steady-state equation and subtracting, we obtain the \emph{ambiguity equation}:
\begin{equation}
\label{eq:ambiguity}
\AAop_{u,\alpha}[\delta D] + \delta B = 0.
\end{equation}
Rearranging the steady-state equation, the profile $u$ fixes $\LL_\alpha u = \gamma u - B$ at every interior point. It does not fix $D$ by itself.

The operator $\AAop_{u,\alpha}$ is obtained by expanding the diffusion operator at fixed $u$. Although $\LL_\alpha$ contains powers of $D$, those powers cancel:
\begin{align}
\LL_\alpha u
&= \nabla \cdot \left[D^\alpha \nabla\left(D^{1-\alpha}u\right)\right] \\
&= D \Delta u + (2-\alpha)\nabla D \cdot \nabla u
   + (1-\alpha)u\,\Delta D. \label{eq:expanded-alpha}
\end{align}
Thus $\LL_\alpha u$ is linear in $D$ once $u$ is fixed. Replacing $D$ by $D+\delta D$ in \eqref{eq:expanded-alpha} and subtracting gives the exact ambiguity operator
\begin{equation}
\label{eq:ambiguity-expanded}
\AAop_{u,\alpha}[\delta D]
= \delta D\,\Delta u
  + (2-\alpha)\nabla(\delta D)\cdot\nabla u
  + (1-\alpha)u\,\Delta(\delta D).
\end{equation}
The ambiguity equation is therefore not a linearization or first-order approximation; the identifiability results below are global within the stated model class. Although $\gamma$ cancels from the ambiguity equation, we assume $\gamma>0$ throughout, and several of the results below rely on it.

Two special cases drive the rest of the paper. For Fickian diffusion $(\alpha=1)$,
\begin{equation}
\label{eq:ambiguity-fickian}
\AAop_{u,1}[\delta D] = \nabla\cdot(\delta D\,\nabla u),
\end{equation}
which is first-order in $\delta D$. For It\^o diffusion $(\alpha=0)$,
\begin{equation}
\label{eq:ambiguity-ito}
\AAop_{u,0}[\delta D] = \Delta(\delta D\,u),
\end{equation}
which is second-order in $\delta D$. The convention $\alpha$ therefore determines which composite of $D$ and $u$ a snapshot constrains.

The parameter set $(D, B)$ is \emph{structurally identifiable} \cite{villaverdeObservabilityStructuralIdentifiability2019} from observation of $u(x)$ if the only solution to the ambiguity equation, subject to appropriate boundary conditions and regularity constraints, is the trivial solution $\delta D \equiv 0$, $\delta B \equiv 0$.

\subsubsection*{Proof strategy.}
Every identifiability proof below compares two parameter sets that produce the same profile $u$ and studies their difference $\delta D$ (with source difference $\delta B$). The useful proof variable depends on the stochastic convention,
\[
\text{It\^o: } f=\delta D\,u, \qquad \text{Fickian: } g=\delta D\,u'.
\]
The two variables satisfy ambiguity equations of different order. When $\delta B=0$, the It\^o equation gives $f''=0$, so $f$ has two constants, while the Fickian equation gives $g'=0$, so $g$ has one. Subtracting shared boundary conditions on $[0,L]$ shows which constants survive: Dirichlet imposes $f(0)=f(L)=0$ but no condition on $g$, whereas Neumann imposes $f'(0)=f'(L)=0$ and $g(0)=g(L)=0$. A point source then adds the source condition, appearing as a corner of $f$ or a jump of $g$. Identifiability holds when these constraints force $\delta D\equiv0$ and $\delta B\equiv0$; otherwise the remaining constants give another admissible pair with the same $u$.

\section{Source structure and identifiability}
\label{sec:diffuse}

\subsection{Diffuse sources are non-identifiable}

For diffuse sources, a source perturbation can be offset by a diffusivity perturbation while leaving the same steady-state density.

\begin{theorem}[Diffuse source non-identifiability]
\label{thm:diffuse-source}
If the birth rate $B(x)$ is a smooth, diffuse function (rather than a singular point source), the pair $(D(x), B(x))$ is non-identifiable for any choice of $\alpha$.
\end{theorem}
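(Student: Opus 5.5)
The plan is to show non-identifiability directly: starting from a true pair $(D_1,B_1)$ with smooth diffuse $B_1$ and its steady state $u$, I will build a nontrivial perturbation $(\delta D,\delta B)$ that solves the ambiguity equation \eqref{eq:ambiguity} and gives an admissible second pair. Because \eqref{eq:ambiguity-expanded} is exact and linear in $\delta D$ at fixed $u$, no linearization is involved. I will take $\delta D=\varepsilon\varphi$ with $\varphi$ a smooth bump compactly supported in the interior of $\Omega$, and then \emph{define} the compensating source as
\[
\delta B \equiv -\AAop_{u,\alpha}[\varepsilon\varphi] = -\varepsilon\Bigl(\varphi\,\Delta u + (2-\alpha)\nabla\varphi\cdot\nabla u + (1-\alpha)u\,\Delta\varphi\Bigr).
\]
This satisfies \eqref{eq:ambiguity} for every $\alpha\in[0,1]$. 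It remains to check three things: the new pair $(D_2,B_2)=(D_1+\varepsilon\varphi,\,B_1+\delta B)$ is admissible, it has the same $u$ with the same boundary conditions, and it differs from the original pair.

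\textbf{Step 1 (boundary conditions).} Since $\varphi$ vanishes together with all its derivatives near $\partial\Omega$, we have $D_2=D_1$ in a neighborhood of the boundary. So the flux \eqref{eq:alpha-flux} and any Robin, Dirichlet or Neumann condition \eqref{eq:robin-bc} are unchanged, and the given $u$ satisfies them for both pairs. The steady-state equation \eqref{eq:steady-state} then holds for $(D_2,B_2)$ by construction, because $\LL_\alpha u$ is linear in $D$ by \eqref{eq:expanded-alpha}.

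\textbf{Step 2 (admissibility and nontriviality).} Positivity of $D_2$ follows for small $\varepsilon$, since $D_1$ is bounded below on $\operatorname{supp}\varphi$. Nonnegativity of $B_2$ is the main obstacle. ``Diffuse'' should mean $B_1>0$ on an open set, and I will place $\operatorname{supp}\varphi$ inside a compact subset of that set where $B_1\ge c>0$. Then $\|\delta B\|_\infty=O(\varepsilon)$, which follows from smoothness of $u$ there by elliptic regularity with smooth $D_1,B_1$. Hence $B_2\ge c-O(\varepsilon)>0$ for small $\varepsilon$. If $B_1$ might vanish identically in the interior, the theorem would not apply, so I will state the assumption $B_1\not\equiv0$ smooth, which is what ``diffuse'' means here. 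Nontriviality is immediate because $\delta D=\varepsilon\varphi\not\equiv0$. A one-parameter family (indeed an infinite-dimensional one, since $\varphi$ is arbitrary) of pairs therefore yields the same $u$, and by \cref{rem:mle-consistency} the same Poisson likelihood \eqref{eq:likelihood}.

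\textbf{Step 3 (remarks).} The construction is uniform in $\alpha$, because only the exact linearity of $\AAop_{u,\alpha}$ is used. It also works in any dimension. If the paper's class requires $D$ to be positive and smooth but not compactly perturbed, the bump still qualifies. The contrast with point sources comes from the fact that a singular $B$ cannot absorb an arbitrary smooth $\delta B$ while remaining of the same form: $\delta B$ must vanish away from the source point. That rigidity is what the later theorems exploit.
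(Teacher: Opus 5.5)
Your proposal is correct and matches the paper's proof: you take a smooth $\delta D$ supported in the interior, define $\delta B := -\AAop_{u,\alpha}[\delta D]$, and observe that the boundary conditions are untouched. Placing the bump inside a region where $B_1\ge c>0$ makes explicit a condition the paper's ``small enough that $B+\delta B\ge 0$'' leaves implicit. Your pointwise check of the PDE and boundary conditions also already implies the Neumann mass-balance compatibility that the paper verifies separately.
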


\begin{proof}
Compare $(D, B)$ with a perturbed pair $(D + \delta D,\, B + \delta B)$ at the fixed profile $u$. By \eqref{eq:ambiguity-expanded} the two produce the same $u$ exactly when
\[
\AAop_{u,\alpha}[\delta D] + \delta B = 0,
\]
which for It\^o $(\alpha=0)$ reads $(\delta D\, u)'' = -\delta B$. We build a nontrivial solution by choosing the diffusivity perturbation first. Take any nonzero smooth $\delta D$ supported in the interior, away from the boundaries, and small enough that $D + \delta D > 0$ and $B + \delta B \ge 0$, and set the matching source perturbation to
\[
\delta B := -\,\AAop_{u,\alpha}[\delta D].
\]
This satisfies the ambiguity equation by construction, so $(D + \delta D,\, B + \delta B)$ reproduces the same profile $u$.

The perturbed pair is admissible. Since $\delta D$ vanishes near the boundaries, so does $\delta B$, and the boundary conditions on $u$ are left intact. For reflecting boundaries the source must also satisfy the mass-balance compatibility $\int_0^L \delta B = 0$; this holds automatically, because $\AAop_{u,\alpha}[\delta D]$ is the divergence of a flux supported in the interior, so
\[
\int_0^L \delta B \,\diff x = -\int_0^L \AAop_{u,\alpha}[\delta D] \,\diff x = 0.
\]
The perturbation $\delta D$ ranges over an infinite-dimensional space, so this confounding is present for every $\alpha$.
\end{proof}

An elevated concentration can reflect local production or locally slow diffusion, and only singular sources introduce the derivative jumps that separate the two. More explicitly, for It\^o diffusion, a prescribed smooth source perturbation $\delta B$ can be compensated by integrating $f'' = -\delta B$ twice and setting $\delta D = f/u$ in the interior (legitimate since $u > 0$ there).

As an explicit example, consider $[0, L]$ with Dirichlet boundaries and true parameters $(D_1, B_1)$ producing $u$. A constant source offset $\delta B(x) = C$ is compensated exactly: solving $f'' = -\delta B$ with $f(0) = f(L) = 0$ gives $f(x) = \tfrac{C}{2}\,x(L - x)$, and setting $\delta D = f/u$ makes $(D_2, B_2) = (D_1 + \delta D, B_1 + \delta B)$ reproduce the same solution $u$. \Cref{fig:Ito_Dirichlet} plots this construction for a Gaussian base source with offset $C = 1$; re-solving the PDE from $(D_2, B_2)$ returns the original density to numerical precision.

\begin{figure}[htbp]
    \centering
    \includegraphics[width=\linewidth]{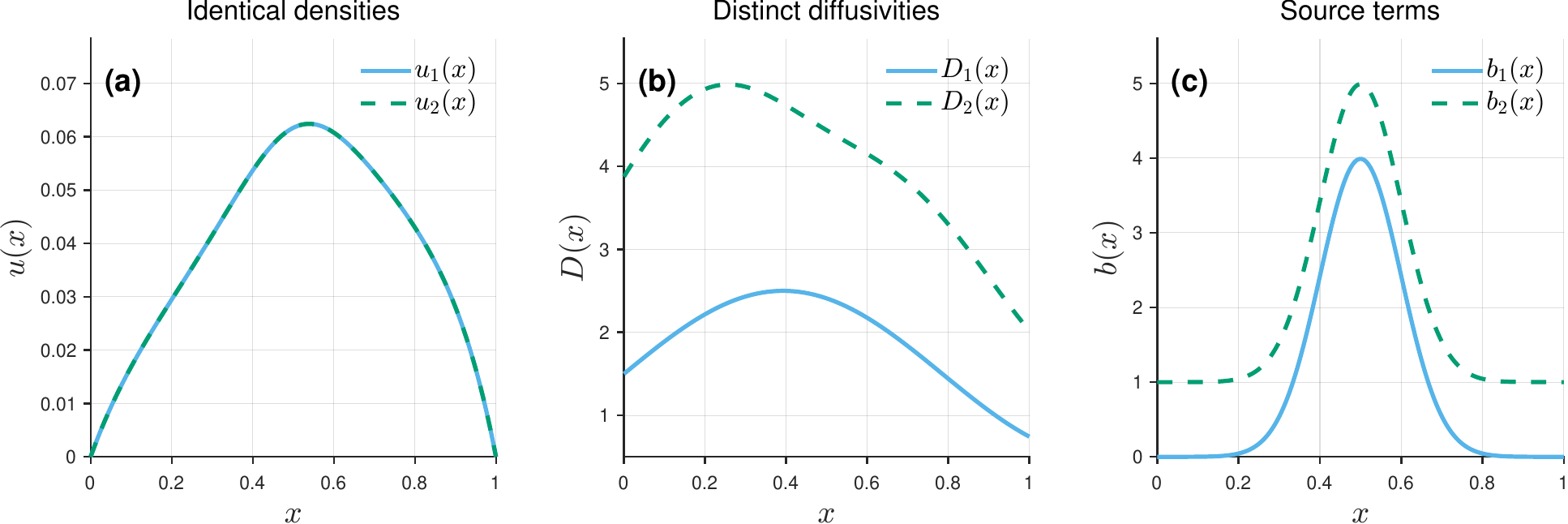}
    \caption{Same-density diffuse-source alternative under It\^o dynamics with Dirichlet boundaries. Here $\gamma = 10$, $D_1(x) = 1.5 + \sin(4x)$, and $b_1(x) = \tfrac{1}{\sigma\sqrt{2\pi}}\exp\!\big(-(x-0.5)^2/2\sigma^2\big)$ with $\sigma = 0.1$; the source perturbation is $b_2=b_1+1$. The compensating $D_2 = D_1+\delta D$ follows the construction in \cref{thm:diffuse-source}. Panels show (a)~the common density, (b)~the indistinguishable diffusivities $D_1$ and $D_2$, and (c)~the alternative sources $b_1$ and $b_2$.
}
    \label{fig:Ito_Dirichlet}
\end{figure}

\section{Single point-source identifiability}
\label{sec:single-species}

\subsection{Setup}
\label{sec:point-source}

\Cref{thm:diffuse-source} shows that, for diffuse sources, a smooth source perturbation can be offset by a smooth diffusivity perturbation. A point source changes this balance because it creates a localized jump condition in the density rather than a smooth one. This is also the biologically natural model for transcription, which occurs at a gene locus rather than over a diffuse region. We therefore take
\[
B(x)=b_0\,\delta(x-z)
\]
with known position $z$ and unknown strength $b_0$. In single-molecule imaging, the active transcription site appears as a distinct bright spot, where nascent transcripts accumulate at the gene locus, and can be localized directly \cite{dingConstitutiveSplicingEconomies2019, tsanovSmiFISHFISHquantFlexible2016}. The density remains continuous, but when $D$ is $C^1$ at the source, the jump condition gives
\[
[u'](z)=-\frac{b_0}{D(z)}.
\]
Here $[w](z) := w(z^+) - w(z^-)$ denotes the jump of a quantity $w$ across $z$, the limit from the right minus the limit from the left.

The relevant modeling assumption is whether $D$ is differentiable at the source. If continuous kinked diffusivities are allowed, the source signature can be moved from $b_0$ into a kink of $D$, producing an observationally equivalent parameter set. If $D$ is required to be $C^1$ at $z$, the singularity is assigned to the source rather than to the medium, and this reallocation is excluded.

\begin{assumption}[The singularity belongs to the source]
\label{ass:regularity}
At each source location $z_i$, the diffusion coefficient $D(x)$ is $C^1$ (continuously differentiable).
\end{assumption}

This assumption treats the gene locus as a singularity of the source, not as a discontinuity in the nuclear medium. It constrains $D$ only at source locations. The diffusivity may still jump at the boundaries of nuclear condensates or chromatin domains, provided it is smooth at the gene locus itself, and this is not a restrictive requirement in practice, since a transcription site is rarely located exactly at such a boundary. The assumption is used below to exclude continuous alternatives whose only defect is a kink in $D$ at $z$.

With the source localized, we now ask what a single snapshot determines about $(D,b_0)$. The convention $\alpha$ determines which perturbation object remains after subtracting two parameter sets, and the boundary condition determines whether its constants are fixed.

Throughout this section the boundary type is known: absorbing (Dirichlet) or reflecting (Neumann). When the boundary is partially permeable rather than idealized, we treat the permeability $\kappa$ as unknown. That case is handled separately in \cref{sec:extended-structure}. In the point-source problem below, the locations are fixed but the strengths may vary, producing jump conditions proportional to the perturbations $\delta b_i$.
\begin{remark}[Stratonovich and intermediate conventions]
\label{rem:intermediate-conventions}
Although we focus the detailed proofs on It\^o diffusion, the same mechanism extends to Stratonovich ($\alpha = 1/2$) and other intermediate conventions with $\alpha < 1$. For $\alpha<1$, the ambiguity operator $\AAop_{u,\alpha}$ is second-order in $\delta D$, since the $\Delta D$ term appears in the expansion of $\LL_\alpha$. Only the Fickian case ($\alpha = 1$) reduces to a first-order operator. Thus the $\alpha<1$ conventions have the It\^o-type identifiability structure, while the Fickian flux ambiguity below is specific to $\alpha = 1$.
\end{remark}

\subsection{Perturbation lemmas}

After two parameter sets are subtracted, the raw unknown $\delta D$ is not the most convenient quantity to track: the closed equation is instead for $f=\delta D\,u$ in the It\^o case and for the flux perturbation $g=\delta D\,u'$ in the Fickian case. The next two lemmas record what the point source and the boundary say about these two quantities. Throughout, $u>0$ in the interior and $D>0$; the It\^o lemma also uses that $\delta D$ is $C^1$ at each source (\cref{ass:regularity}).

\begin{lemma}[It\^o source jump]
\label{lem:source-jump}
For It\^o diffusion ($\alpha = 0$) with point sources of nonzero strengths $b_i$ at distinct interior points $z_i$, suppose two parameter sets produce the same $u$ and write $f = \delta D\, u$ for the difference. Then
\begin{enumerate}
\item[(i)] $f'' = -\sum_i \delta b_i\,\delta(x - z_i)$ in the interior, so $f$ is continuous and piecewise linear, with a corner $[f'](z_i) = -\delta b_i$ at each source;
\item[(ii)] at each source,
\[
[u'](z_i) = -\frac{b_i}{D(z_i)}, \qquad
\delta D(z_i) = \delta b_i\,\frac{D(z_i)}{b_i};
\]
\item[(iii)] the inherited boundary conditions are: Dirichlet gives $f = 0$ at each endpoint, while reflecting (Neumann) boundaries give $f' = 0$ at each endpoint. Reflecting boundaries also yield the mass balance
\[
\sum_i b_i = \gamma\int_0^L u\,\diff x,
\]
which fixes the total source strength.
\end{enumerate}
\end{lemma}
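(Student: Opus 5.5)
The plan is to work directly with the It\^o steady-state equation in one dimension, written in distributional form on $(0,L)$:
\[
(D u)'' - \gamma u + \sum_i b_i\,\delta(x-z_i) = 0 .
\]
Part (i) follows by writing this equation for both parameter sets $(D_1,\{b_i\})$ and $(D_2,\{b_i+\delta b_i\})$, which share the same $u$, and subtracting. The $\gamma u$ terms cancel, leaving $(\delta D\,u)'' = -\sum_i \delta b_i\,\delta(x-z_i)$, which is \eqref{eq:ambiguity-ito} with $\delta B$ the difference of the singular sources. Away from the $z_i$ this gives $f''=0$, so $f$ is affine on each subinterval. Integrating once across a small neighborhood of $z_i$ gives the corner $[f'](z_i)=-\delta b_i$. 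Continuity of $f$ comes from the fact that $u$ is continuous, since $Du$ is the distributional double antiderivative of a measure and $D>0$ is continuous. Together with \cref{ass:regularity}, which makes $\delta D$ continuous at $z_i$, this shows $f=\delta D\,u$ is continuous.

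For part (ii), apply the same integration to a single equation, which gives $[(Du)'](z_i)=-b_i$. Expanding $(Du)'=D'u+Du'$ and using that $D$ is $C^1$ at $z_i$ (\cref{ass:regularity}) and $u$ is continuous, the term $D'u$ has no jump. Hence $D(z_i)[u'](z_i)=-b_i$, which is the first identity.

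For the second identity, expand the corner of $f$ the same way: $[f'](z_i)=\delta D(z_i)[u'](z_i)$, because $\delta D$ is $C^1$ at $z_i$ (both $D_1$ and $D_2$ are). Substituting $[u'](z_i)=-b_i/D_1(z_i)$ into $[f'](z_i)=-\delta b_i$ gives $\delta D(z_i)=\delta b_i D(z_i)/b_i$; this division uses $b_i\neq0$. The one point to state explicitly is that $D$ here means $D_1$, the reference set whose $u'$-jump is used. The relation is exact rather than linearized, since the same $u$ is used for both parameter sets.

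For part (iii), take the Dirichlet case first. Both parameter sets impose $u=0$ at the endpoints, so $f=\delta D\,u$ vanishes there. For reflecting boundaries, the It\^o flux in \eqref{eq:alpha-flux} with $\alpha=0$ is $-(Du)'$. Requiring $(D_1u)'=(D_2u)'=0$ at each endpoint and subtracting gives $f'=0$ there. The mass balance comes from integrating the steady-state equation over $[0,L]$: the $(Du)''$ term contributes $(Du)'\big|_0^L=0$, each delta contributes $b_i$ because the $z_i$ are interior, and what remains is $\sum_i b_i=\gamma\int_0^L u\,\diff x$. The right-hand side is determined by $u$ alone, so the total source strength is fixed.

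The main obstacle is justifying the jump computations at $z_i$ rigorously. This requires knowing that $Du$ is continuous with one-sided derivatives, so that integrating across a shrinking interval is legitimate. I would handle this by noting that $(Du)''$ equals a bounded function plus finitely many Dirac masses, so $(Du)'$ is of bounded variation with jumps only at the $z_i$. Continuity of $D$ at $z_i$ then allows the jump of $(Du)'$ to be transferred to a jump of $u'$.
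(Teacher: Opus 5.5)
Your proposal is correct and follows the paper's proof essentially step for step: you subtract the two It\^o equations to get $(\delta D\,u)''=-\sum_i \delta b_i\,\delta(x-z_i)$, integrate across each $z_i$ using the $C^1$ assumption to remove the $[D'u]$ and $[(\delta D)'u]$ jumps, and subtract the boundary conditions and integrate over $[0,L]$ to get the mass balance. Your extra care (identifying $D$ with the reference $D_1$, using $b_i\neq0$, and justifying one-sided limits through bounded variation of $(Du)'$) makes explicit what the paper leaves implicit; one small simplification is that continuity of $f$ follows directly from $f''$ being a finite sum of Dirac masses, so you do not need $D$ to be continuous everywhere, which the paper does not assume away from the sources.
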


\begin{proof}
For (i), the It\^o ambiguity equation is $(\delta D\, u)'' = -\sum_i \delta b_i\,\delta(x-z_i)$. Thus $f'' = 0$ away from the sources, so $f$ is piecewise linear, and integrating across $z_i$ gives the corner condition $[f'](z_i) = -\delta b_i$.

For (ii), integrate the unperturbed steady state $(Du)'' = \gamma u - \sum_i b_i\,\delta(x-z_i)$ across $z_i$. This gives $[(Du)'](z_i) = -b_i$. Since $D$ is $C^1$ at the source under \cref{ass:regularity} and $u$ is continuous, $[(Du)'](z_i)=D(z_i)[u'](z_i)$, so $[u'](z_i) = -b_i/D(z_i)$. Expanding $[f'](z_i) = [(\delta D)'\,u](z_i) + [\delta D\, u'](z_i)$, the first jump vanishes because $\delta D$ is $C^1$ at the source, so $-\delta b_i = [f'](z_i) = \delta D(z_i)\,[u'](z_i)$, which gives the stated value of $\delta D(z_i)$.

For (iii), subtract the boundary conditions of the two parameter sets. Under Dirichlet conditions $u = 0$ at each endpoint, so $f = \delta D\, u = 0$ there. Under reflecting conditions $(Du)' = 0$ for each set, so their difference gives $f' = (\delta D\, u)' = 0$ at each endpoint. Integrating the reflecting steady-state equation over $[0,L]$ removes the flux term and leaves the mass balance.
\end{proof}

\begin{lemma}[Fickian flux perturbation]
\label{lem:flux-gauge}
For Fickian diffusion ($\alpha = 1$) with $u'$ not identically zero, the flux perturbation $g := \delta D\, u'$ obeys $g' = -\delta B$, so
\[
g(x) = C_1 - \int_0^x \delta B(\xi)\,\diff\xi.
\]
In particular $g$ is constant when the source perturbation vanishes ($\delta B = 0$), and piecewise constant, $g(x) = C_1 - \sum_i \delta b_i\, H(x - z_i)$, where $H$ is the Heaviside step function, for point sources of perturbed strengths $\delta b_i$. At a reflecting boundary, the inherited flux condition forces $g$ to vanish there.
\end{lemma}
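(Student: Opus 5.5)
The plan is to specialize the general ambiguity equation \eqref{eq:ambiguity} to $\alpha=1$ in one dimension and integrate it once. By \eqref{eq:ambiguity-fickian}, with $d=1$ we have $\AAop_{u,1}[\delta D]=(\delta D\,u')'$. Because $\LL_1 u=(Du')'$ is linear in $D$ at fixed $u$ (see \eqref{eq:expanded-alpha}), subtracting the two steady-state equations $(D_ju')'-\gamma u+B_j=0$, $j=1,2$, gives
\[
g' = (\delta D\,u')' = -\delta B
\]
on $(0,L)$. This identity is exact and holds in the distributional sense, so singular sources are allowed. Integrating from $0$ to $x$ then gives $g(x)=g(0^+)-\int_0^x\delta B\,\diff\xi$, and we set $C_1:=g(0^+)$. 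The hypothesis that $u'$ is not identically zero only ensures that $g$ carries information about $\delta D$; it is not needed for the integration itself.

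The special cases follow directly from this formula. If $\delta B=0$, then $g'=0$ on the interval, so $g\equiv C_1$. For point sources, $\delta B=\sum_i\delta b_i\,\delta(x-z_i)$, and its antiderivative from $0$ is $\sum_i\delta b_i H(x-z_i)$. This yields the stated piecewise-constant form, with a jump $[g](z_i)=-\delta b_i$ at each source.

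The boundary statement comes from subtracting the reflecting (Neumann) conditions. With $\kappa=0$ and $\alpha=1$, the flux condition \eqref{eq:robin-bc} reads $D_j u'=0$ at each endpoint, for both parameter sets. Taking the difference gives $\delta D\,u'=g=0$ at each endpoint.

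The main obstacle is justifying the integration when $\delta B$ contains Dirac masses and $D$ is possibly only piecewise smooth. I would handle this by noting that $D_ju'$ is the flux, which is of bounded variation: integrating $(D_ju')'=\gamma u-B_j$ shows that each flux is absolutely continuous except for jumps of size $-b_i$ at the sources. Hence $g$ has one-sided limits everywhere, and the fundamental theorem of calculus holds in its BV form.

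Unlike in \cref{lem:source-jump}, \cref{ass:regularity} is not needed here. The Fickian jump of $g$ is a jump of a flux, and it does not require splitting $[\delta D\,u']$ into a product of a continuous factor and a jump.
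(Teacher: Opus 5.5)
Your proof is correct and matches the paper's: both integrate the Fickian ambiguity equation $(\delta D\,u')'=-\delta B$ once, and both obtain the boundary statement from the reflecting flux condition. The only differences are cosmetic. At the boundary, the paper uses $D>0$ to conclude $u'=0$ and hence $g=0$, whereas you subtract the two conditions $D_j u'=0$ directly. Your bounded-variation justification for integrating across the Dirac sources is a harmless refinement that the paper leaves implicit.
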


\begin{proof}
Integrating the Fickian ambiguity equation $(\delta D\, u')' = -\delta B$ once gives $g$ as stated. At a reflecting boundary, $Du' = 0$ for each parameter set; since $D > 0$, this forces $u' = 0$, and hence $g = \delta D\, u' = 0$ there.
\end{proof}

The source jump, the boundary condition, and the regularity assumption together determine whether $f$ or $g$ vanishes. The It\^o variable $f=\delta D\,u$ is constrained by source corners and by boundary values or slopes; the Fickian variable $g=\delta D\,u'$ is constrained only by conditions that fix the flux. The unknown-permeability case is considered separately in \cref{sec:extended-structure}.

\subsection{It\^o identifiability for a point source}

For the It\^o convention, the point source and the regularity assumption together suffice for identifiability, once the boundary is known.

\begin{theorem}[It\^o point-source identifiability]
\label{thm:ito-point-source}
Consider the It\^o model ($\alpha = 0$) with a single point source $B(x) = b_0\,\delta(x - z)$ of known location $z$ and unknown strength $b_0$, decay $\gamma > 0$, and known boundary conditions (Dirichlet or Neumann). Within the admissible class in which $D$ is $C^1$ at the source (\cref{ass:regularity}), the pair $(D(x), b_0)$ is identifiable from the steady-state profile $u(x)$.
\end{theorem}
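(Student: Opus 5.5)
We argue directly from \cref{lem:source-jump}. Suppose $(D_1,b_1)$ and $(D_2,b_2)$ are admissible (positive, $C^1$ at $z$) and produce the same profile $u$. Set $\delta D=D_2-D_1$, $\delta b=b_2-b_1$ and $f=\delta D\,u$. By \cref{lem:source-jump}(i), $f$ is continuous on $[0,L]$ and affine on each of $[0,z]$ and $[z,L]$, with a single corner $[f'](z)=-\delta b$. It therefore suffices to show $f\equiv0$. Then $\delta D=f/u\equiv0$ in the interior, since $u>0$ there, and also $\delta b=-[f'](z)=0$. Continuity of $D_1,D_2$ extends $\delta D=0$ to the endpoints.

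\emph{Neumann case.} By \cref{lem:source-jump}(iii), $f'(0)=f'(L)=0$. An affine function with zero slope is constant, so $f\equiv c_-$ on $[0,z]$ and $f\equiv c_+$ on $[z,L]$. Continuity at $z$ gives $c_-=c_+=:c$, and hence $[f'](z)=0$, so $\delta b=0$. To remove $c$, use the mass balance in \cref{lem:source-jump}(iii): $b_i=\gamma\int_0^L u$ is the same for both parameter sets, which again gives $\delta b=0$ but says nothing about $c$. Instead we use \cref{lem:source-jump}(ii): $\delta D(z)=\delta b\,D(z)/b_0=0$. Then $c=f(z)=\delta D(z)\,u(z)=0$, because $u(z)$ is finite. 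This is where \cref{ass:regularity} enters.

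\emph{Dirichlet case.} Now $f(0)=f(L)=0$, so $f=\beta_- x$ on $[0,z]$ and $f=\beta_+(x-L)$ on $[z,L]$. Continuity at $z$ gives $\beta_- z=\beta_+(z-L)$, leaving one free constant. The corner condition gives $\beta_+-\beta_-=-\delta b$. Part (ii) gives $f(z)=\delta D(z)\,u(z)=\delta b\,D_1(z)u(z)/b_1$. Write $f(z)=\beta_- z=:\phi$. Then $\beta_-=\phi/z$ and $\beta_+=\phi/(z-L)$, so
\[
-\delta b=\phi\Bigl(\tfrac{1}{z-L}-\tfrac{1}{z}\Bigr)=-\phi\,\tfrac{L}{z(L-z)} .
\]
Combining with $\phi=\delta b\,D_1(z)u(z)/b_1$ yields
\[
\delta b\Bigl(1-\tfrac{L\,D_1(z)u(z)}{b_1 z(L-z)}\Bigr)=0 .
\]

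The main obstacle is showing that this bracket is nonzero; otherwise a one-parameter family survives. Integrating the steady state gives the source jump $-b_1=[(D_1u)'](z)$. Here $w=D_1u$ vanishes at $0$ and $L$ and satisfies $w''=\gamma u>0$ away from $z$, so $w$ is strictly convex on each piece. Convexity with $w(0)=0$ gives $w'(z^-)>w(z)/z$. Convexity with $w(L)=0$ gives $w'(z^+)<-w(z)/(L-z)$. Hence
\[
b_1=w'(z^-)-w'(z^+)>w(z)\Bigl(\tfrac1z+\tfrac1{L-z}\Bigr)=\tfrac{L\,D_1(z)u(z)}{z(L-z)} .
\]
So the bracket is strictly positive, which forces $\delta b=0$, then $\phi=0$, then $f\equiv0$. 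This step uses $\gamma>0$ essentially: with $\gamma=0$ equality holds and identifiability fails.
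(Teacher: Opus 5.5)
Your proof is correct. It shares the paper's skeleton: the same auxiliary $f=\delta D\,u$, the same use of \cref{lem:source-jump}, and in the Dirichlet case the same compatibility condition. Your explicit piecewise-affine $f$ is exactly $\delta b\,G(\cdot,z)$, and your bracket is the paper's determinant $u(z)+G(z,z)[u'](z)$ multiplied by the nonzero factor $-D_1(z)/(b_1G(z,z))$.

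\textbf{Dirichlet case.} The genuine difference is how you show the bracket is nonzero. The paper writes the Green's representation of $V=Du$ and obtains the exact identity
\[
D(z)u(z)-b_0G(z,z)=-\gamma\int_0^L G(x,z)\,u(x)\,\diff x .
\]
You obtain only the strict inequality $b_1>D_1(z)u(z)/G(z,z)$, from strict convexity of $w=D_1u$ on each side of $z$ together with a secant-slope comparison. Your route is more elementary: it needs only $w''=\gamma u>0$ off the source and $w(0)=w(L)=0$. The paper's route is quantitative: the identifiability margin is exactly $\gamma\int G u$, linear in $\gamma$. That makes explicit how the problem degenerates toward the $(D,b_0)\mapsto(cD,cb_0)$ scaling symmetry as degradation weakens, the mechanism behind the scale errors in the numerical section.

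\textbf{Neumann case.} You get $\delta b=0$ directly from the corner condition, since both one-sided slopes vanish, rather than from mass balance. You then remove the constant via \cref{lem:source-jump}(ii), i.e.\ $\delta D(z)=0$, instead of computing the kink $[\delta D'](z)=Cb_0/(D(z)u(z)^2)$ of $\delta D=C/u$. Since (ii) is itself derived from $[\delta D'](z)=0$, this is the same use of \cref{ass:regularity}, just routed through the lemma. The paper's version has the advantage of displaying the forbidden kinked alternative that \cref{rem:near-null} and the figures exhibit. Your aside that mass balance ``says nothing about $c$'' is harmless but unnecessary, since the corner condition already gives $\delta b=0$.
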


\emph{Proof idea.} (Full proof in \cref{app:proof-thm4}.)
Suppose another pair $(\widetilde D,\widetilde b_0)$ produces the same $u$, and write $f=\delta D\,u$. By \cref{lem:source-jump}, $f$ is piecewise linear with source corner $[f'](z)=-\delta b_0$, and the same source jump fixes $\delta D(z)=\delta b_0\,D(z)/b_0$. The argument compares this source-jump value of $\delta D(z)$ with the value of $f(z)=\delta D(z)\,u(z)$ imposed by the boundary, and the two boundaries close it differently.

\emph{Dirichlet.} Part~(iii) of the lemma gives $f = \delta b_0\, G(\cdot, z)$, with $G$ the Dirichlet Green's function. Evaluating at $z$ and comparing with the source-jump value $\delta D(z) = \delta b_0\, D(z)/b_0$ of part~(ii), the two are compatible with $\delta b_0\neq0$ only if
\[
u(z) + G(z,z)\,[u'](z)=0.
\]
But this quantity equals $-(\gamma/D(z))\int_0^L G(x, z)\,u(x)\,\diff x$, which is nonzero whenever $\gamma > 0$. Hence $\delta b_0 = 0$, whence $f \equiv 0$ and $\delta D \equiv 0$.

\emph{Neumann.} Part~(iii) forces $\delta b_0 = 0$ through the mass balance, leaving $f \equiv C$ constant, i.e.\ $\delta D = C/u$. This perturbation is continuous but kinks at $z$ (since $u'$ jumps there), and \cref{ass:regularity} forbids the kink, forcing $C = 0$.

\begin{remark}[Explicit same-density alternatives]
\label{rem:near-null}
The source-regularity assumption excludes the continuous same-density alternatives. Without it, each boundary leaves one explicitly. Under Dirichlet boundaries, change the source strength by a nonzero $\delta b_0$. Part~(iii) of \cref{lem:source-jump} gives $f = \delta b_0\, G(\cdot, z)$ with $G$ the Dirichlet Green's function, so
\[
\delta D = \frac{f}{u} = \frac{\delta b_0\, G(\cdot, z)}{u}, \qquad b_0^{(2)} = b_0^{(1)} + \delta b_0.
\]
Under reflecting boundaries, the mass balance fixes $\delta b_0 = 0$, leaving the homogeneous mode $f \equiv C$, so
\[
\delta D = \frac{C}{u}.
\]
In each case $(D_1 + \delta D,\, b_0^{(1)} + \delta b_0)$ reproduces the same density $u$, but $\delta D$ kinks at the source because $u'$ jumps there.
\end{remark}

Using the constructions in \cref{rem:near-null}, \cref{fig:Ito_Dirichlet_pointSource} plots the Dirichlet alternative and \cref{fig:Ito_Neumann_pointSource} the reflecting one. In both cases the density is reproduced exactly, while $\delta D$ is continuous but kinked at the source, a $C^0$ rather than $C^1$ perturbation.

\begin{figure}[tb]
    \centering
    \includegraphics[width=\linewidth]{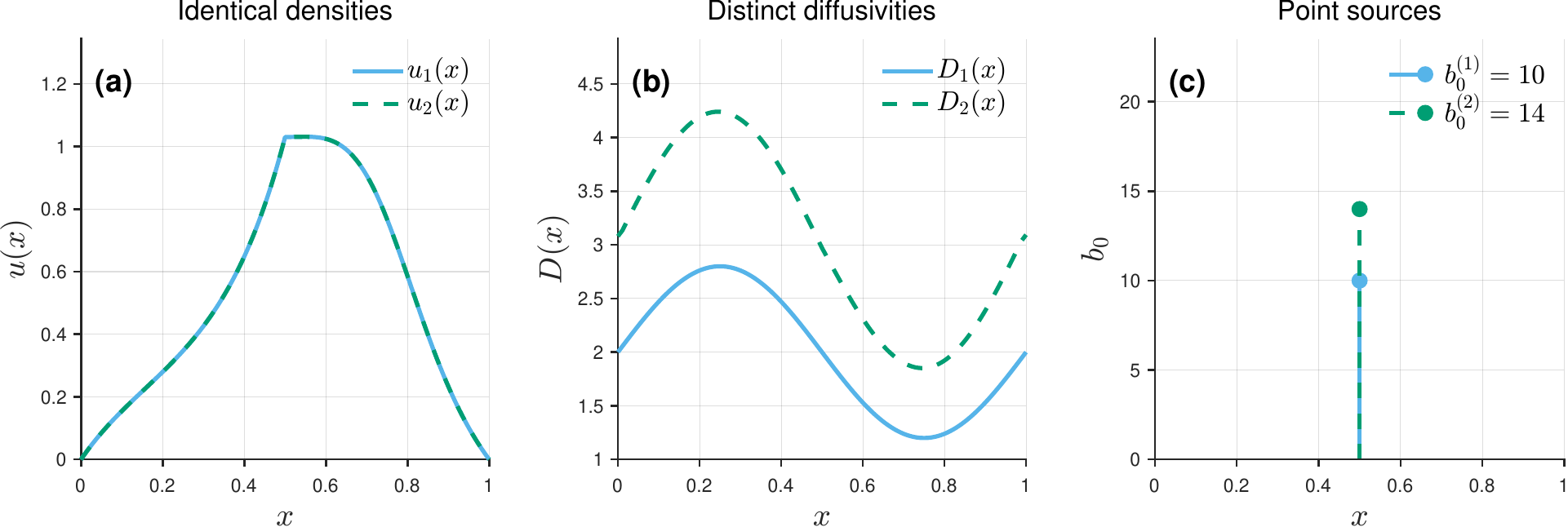}
    \caption{Same-density It\^o alternative under absorbing Dirichlet boundaries. Here $D_1(x) = 2 + 0.8\sin(2\pi x)$, $z = 0.5$, $\gamma = 5$, $b_0^{(1)} = 10$, and $b_0^{(2)} = 14$; the alternative $D_2 = D_1+\delta D$ follows the construction in \cref{rem:near-null}. Panels show (a)~the common density, (b)~the indistinguishable diffusivities $D_1$ and $D_2$, with $\delta D$ continuous but kinked at $z$, and (c)~the alternative source strengths.}
    \label{fig:Ito_Dirichlet_pointSource}
\end{figure}

\begin{figure}[tb]
    \centering
    \includegraphics[width=\linewidth]{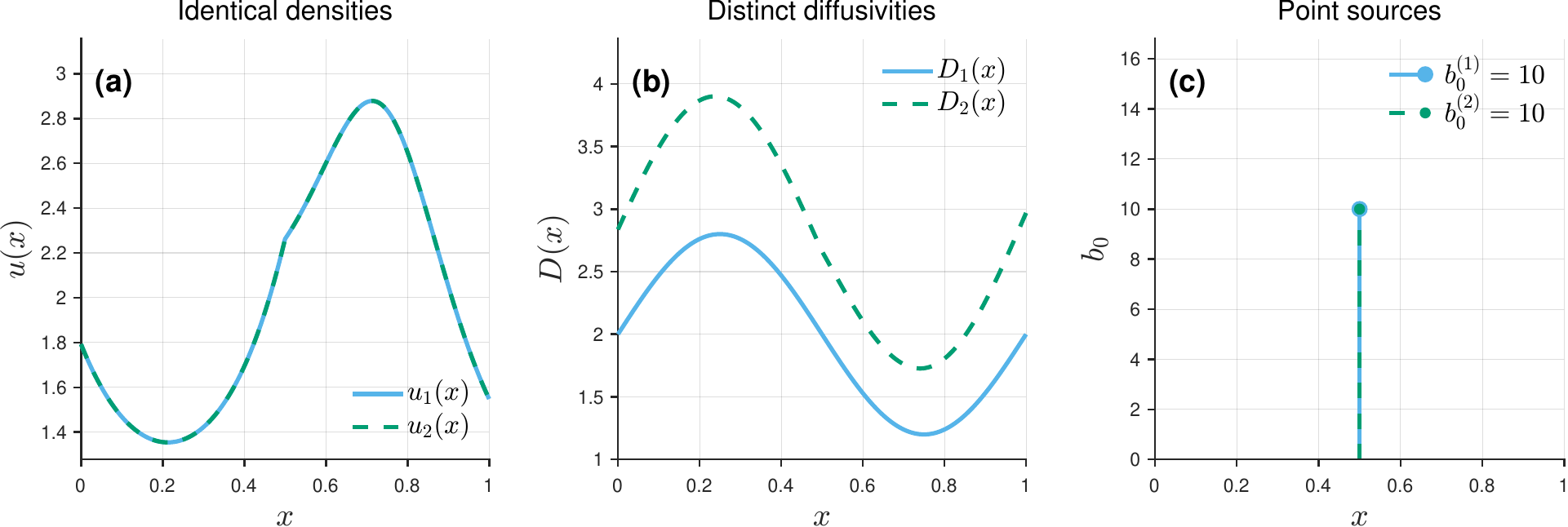}
    \caption{Same-density It\^o alternative under reflecting (Neumann) boundaries. Here $D_1(x) = 2 + 0.8\sin(2\pi x)$, $b_0 = 10$, $z = 0.5$, and $\gamma = 5$; the alternative $D_2 = D_1+\delta D$ uses the homogeneous mode $C=1.5$ from \cref{rem:near-null}. Panels show (a)~the common density, (b)~the indistinguishable diffusivities $D_1$ and $D_2$, with $\delta D$ continuous but kinked at $z$, and (c)~the unchanged source strength.}
    \label{fig:Ito_Neumann_pointSource}
\end{figure}

\subsection{The Fickian flux ambiguity}

In the Fickian case the closed perturbation equation is first-order, for the flux perturbation $g=\delta D\,u'$, rather than second-order in $f=\delta D\,u$. A point source again produces a continuous kinked alternative unless the boundary condition fixes the flux directly.

\begin{theorem}[Fickian point-source identifiability]
\label{thm:fickian-nonident}
Consider the Fickian model ($\alpha=1$) with a single point source $B(x)=b_0\,\delta(x-z)$ of known location $z$, unknown strength $b_0$, and decay $\gamma>0$. Assume $u'\neq 0$ on each open interval between $z$ and the endpoints, and $u'(z^\pm)\neq0$ at the source. If the boundary condition does not fix the Fickian flux perturbation $g=\delta D\,u'$ (as under Dirichlet boundaries), then the continuous admissible class contains same-density alternatives of the form
\[
g=\delta D\,u' = C_1-\delta b_0\, H(x-z),
\qquad
\delta D=\frac{C_1-\delta b_0\, H(x-z)}{u'}.
\]
Every nontrivial continuous alternative has a slope jump at $z$. Consequently, within the class where $D$ is $C^1$ at the source (\cref{ass:regularity}), the pair $(D,b_0)$ is identifiable.
\end{theorem}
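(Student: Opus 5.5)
The plan has two parts: first characterize every same-density alternative through the Fickian flux perturbation of \cref{lem:flux-gauge}, then show that Assumption~\ref{ass:regularity} removes all nontrivial ones. Suppose $(D+\delta D,\, b_0+\delta b_0)$ reproduces $u$. \Cref{lem:flux-gauge}, with $\delta B=\delta b_0\,\delta(x-z)$, gives
\[
g=\delta D\,u' = C_1-\delta b_0\,H(x-z).
\]
On each open interval between $z$ and an endpoint we have $u'\neq0$, so we can divide and obtain the stated formula for $\delta D$. Conversely, any $C_1$ and $\delta b_0$ define a $\delta D$ satisfying $(\delta D\,u')'=-\delta B$ in the distributional sense. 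If $D+\delta D>0$ (true for small perturbations) and the boundary condition places no constraint on $g$, as in the Dirichlet case where $u=0$ is inherited trivially, then the perturbed pair reproduces $u$. This gives the claimed family of alternatives.

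The next step is to impose continuity of $\delta D$ at $z$. The one-sided limits are
\[
\delta D(z^-)=\frac{C_1}{u'(z^-)}, \qquad \delta D(z^+)=\frac{C_1-\delta b_0}{u'(z^+)}.
\]
Both are finite because $u'(z^\pm)\neq0$. Setting them equal gives one linear relation between $C_1$ and $\delta b_0$. I will also use the unperturbed jump condition: since $D$ is $C^1$ at $z$, integrating $(Du')'=\gamma u-b_0\delta(x-z)$ across $z$ gives $D(z)[u'](z)=-b_0$, so $u'$ genuinely jumps.

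The main step is to show that a nontrivial continuous alternative must have a slope jump at $z$. Write $\delta D=g/u'$ on each side. Away from $z$ we have $g'=0$, so $(\delta D)'=-g\,u''/(u')^2$. The equation $(Du')'=\gamma u$ gives $u''=(\gamma u-D'u')/D$ on each side, and $D$, $D'$, $u$ are continuous at $z$. Hence
\[
(\delta D)'(z^\pm)=-\frac{\delta D(z)\,\gamma u(z)}{D(z)\,u'(z^\pm)}+\frac{\delta D(z)\,D'(z)}{D(z)}.
\]
Subtracting the two sides, the $D'$ terms cancel. Since $u(z)>0$, $\gamma>0$, and $u'(z^+)\neq u'(z^-)$, the slope jump $[(\delta D)'](z)$ vanishes only if $\delta D(z)=0$. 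In that case continuity forces $C_1=0$ and $C_1=\delta b_0$, so $C_1=\delta b_0=0$ and $\delta D\equiv0$ on both intervals. I expect this computation to be the delicate point. If the cancellation proves awkward, I will instead argue through the perturbed pair itself: $D+\delta D$ must also satisfy the jump relation $(D+\delta D)(z)[u'](z)=-(b_0+\delta b_0)$. Combined with the continuity relation, this should again force $\delta D(z)=0$.

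Finally, under Assumption~\ref{ass:regularity}, $D+\delta D$ must be $C^1$ at $z$. The slope jump is therefore excluded, every alternative is trivial, and $(D,b_0)$ is identifiable. For Neumann boundaries the result holds more directly: \cref{lem:flux-gauge} gives $g(0)=0$ and $g(L)=0$, which force $C_1=0$ and $\delta b_0=0$.
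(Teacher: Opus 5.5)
Your main argument is correct and follows the paper's proof essentially step for step: \cref{lem:flux-gauge}, then continuity of $\delta D=g/u'$ at $z$, then the steady-state identity $u''/u'=\gamma u/(Du')-D'/D$ to obtain $[\delta D'](z)=-\delta D(z)\,\frac{\gamma u(z)}{D(z)}\bigl[1/u'\bigr](z)$, which is nonzero unless the alternative is trivial (you split on whether $\delta D(z)=0$, whereas the paper first computes $\delta D(z)=\delta b_0D(z)/b_0$; the two are equivalent). Drop the fallback, because it cannot force $\delta D(z)=0$: the perturbed jump relation $\delta D(z)[u'](z)=-\delta b_0$ already follows from continuity of $\delta D$ together with the jump of $g$, and it never uses $\gamma>0$ (at $\gamma=0$ the scaling $(cD,cb_0)$ gives genuine $C^1$ alternatives).
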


\begin{proof}
Suppose another admissible pair produces the same profile $u$, and let $\delta b_0$ be its source-strength difference. By \cref{lem:flux-gauge} the flux perturbation is piecewise constant,
\[
g(x):=\delta D(x)\,u'(x)=C_1-\delta b_0\, H(x-z),
\]
so $\delta D=g/u'$ wherever $u'\neq0$, and this reproduces $u$ for any choice of $C_1$ and $\delta b_0$. It remains to see when such a $\delta D$ can be continuous and $C^1$ at the source.

The point source makes $u'$ jump, $[u'](z)=-b_0/D(z)\neq0$, and $u'(z^\pm)\neq0$ by hypothesis, so continuity of $\delta D=g/u'$ at $z$,
\[
\frac{C_1}{u'(z^-)}=\frac{C_1-\delta b_0}{u'(z^+)},
\]
fixes $C_1=\delta b_0\,u'(z^-)D(z)/b_0$, and the common one-sided value is $d_0:=\delta D(z)=C_1/u'(z^-)=\delta b_0\,D(z)/b_0$. A nontrivial alternative has $\delta b_0\neq0$ (if $\delta b_0=0$ then $C_1=0$ and $\delta D\equiv0$), hence $d_0\neq0$.

Because $g$ is constant on each side of $z$, differentiating $\delta D=g/u'$ gives $\delta D'=-\delta D\,u''/u'$, so
\[
[\delta D'](z)=-d_0\left[\frac{u''}{u'}\right](z).
\]
Away from the source the steady-state equation $Du''+D'u'=\gamma u$ gives $u''/u'=\gamma u/(Du')-D'/D$, and since $D$ is $C^1$ at $z$ the $D'/D$ term has no jump there, leaving
\[
\left[\frac{u''}{u'}\right](z)=\frac{\gamma u(z)}{D(z)}\left[\frac{1}{u'}\right](z)=\frac{\gamma\,b_0\,u(z)}{D(z)^2\,u'(z^+)u'(z^-)},
\]
using $[1/u'](z)=\bigl(u'(z^-)-u'(z^+)\bigr)/\bigl(u'(z^+)u'(z^-)\bigr)=(b_0/D(z))/(u'(z^+)u'(z^-))$. Hence
\[
[\delta D'](z)=-d_0\,\frac{\gamma\,b_0\,u(z)}{D(z)^2\,u'(z^+)u'(z^-)},
\]
which is nonzero whenever $\gamma>0$. Every nontrivial continuous alternative thus has a genuine kink at the source, so \cref{ass:regularity} excludes it and forces $\delta D\equiv0$, $\delta b_0=0$.
\end{proof}

\Cref{fig:Fickian_Dirichlet} shows the construction for the Dirichlet case with a point source at $z$. Because $u'$ jumps there, keeping $\delta D = (C_1 - \delta b_0 H)/u'$ continuous forces the source strength to change too ($b_0^{(1)} \neq b_0^{(2)}$); the resulting $(D_2, b_0^{(2)})$ reproduces the same monotone density, with $\delta D$ kinked at $z$.

\begin{figure}[tb]
    \centering

    \includegraphics[width=\linewidth]{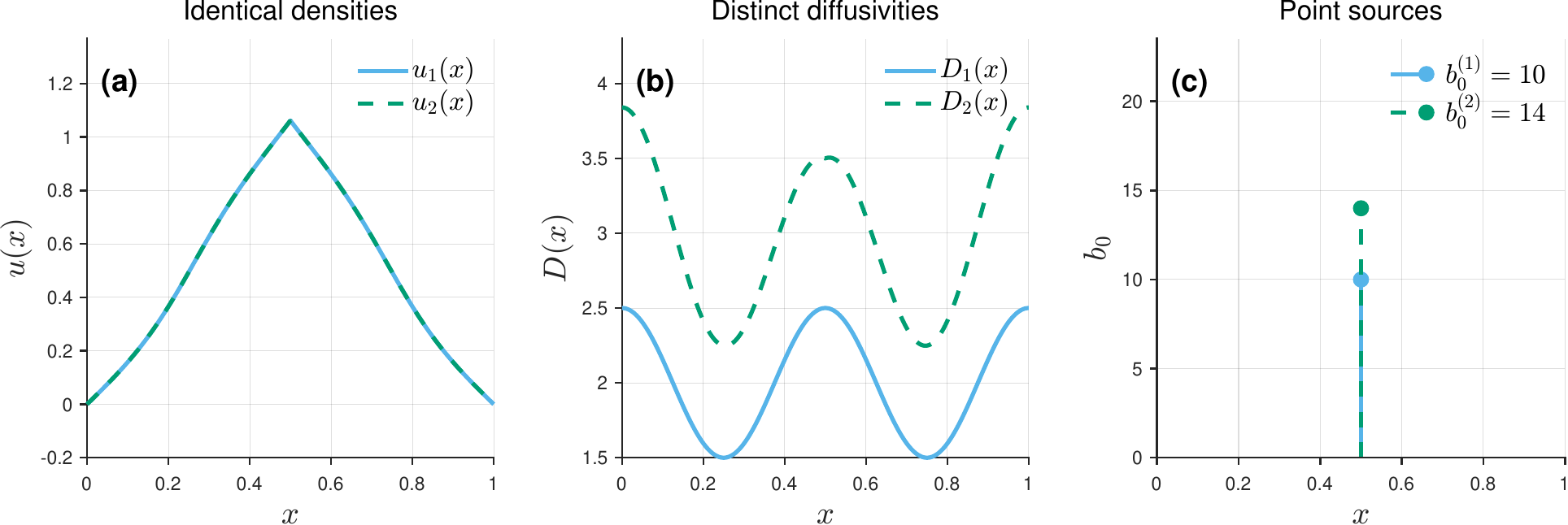}
    \caption{Same-density Fickian alternative under absorbing Dirichlet boundaries. Here $D_1(x) = 2 + 0.5\cos(4\pi x)$, $z = 0.5$, $\gamma = 5$, $b_0^{(1)}=10$, and $b_0^{(2)}=14$; the alternative $D_2 = D_1+\delta D$ follows the construction in \cref{thm:fickian-nonident}. Panels show (a)~the common density, with the source kink at $z$, (b)~the indistinguishable diffusivities $D_1$ and $D_2$, with $\delta D$ continuous but kinked at $z$, and (c)~the alternative source strengths.
}
    \label{fig:Fickian_Dirichlet}
\end{figure}

\subsection{Reflecting boundaries in the Fickian case}

There is a single configuration that needs no regularity assumption at all. Reflecting (Neumann) boundaries force $u'=0$ at the boundary, and the Fickian ambiguity is exactly the flux perturbation $\delta D\,u'$.

\begin{corollary}[Reflecting boundaries identify the Fickian point-source model]
\label{cor:neumann-fickian}
Consider the Fickian model ($\alpha=1$) with a single point source $B(x)=b_0\,\delta(x-z)$, known source location $z$, unknown strength $b_0$, and reflecting (Neumann) boundary conditions at both endpoints. Then the pair $(D,b_0)$ is identifiable from the steady-state profile $u$.
\end{corollary}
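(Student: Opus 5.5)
The plan is to run the argument of \cref{thm:fickian-nonident} again, now using the reflecting boundary to pin down the flux constant that the Dirichlet case left free. Suppose a second pair $(\widetilde D,\widetilde b_0)$ produces the same profile $u$, and write $\delta D=\widetilde D-D$ and $\delta b_0=\widetilde b_0-b_0$. By \cref{lem:flux-gauge}, the flux perturbation is piecewise constant:
\[
g(x)=\delta D(x)\,u'(x)=C_1-\delta b_0\,H(x-z).
\]
The lemma also says that at a reflecting endpoint the inherited condition $Du'=0$, together with $D>0$, forces $u'=0$ and hence $g=0$ there.

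Evaluating $g$ at $x=0$, which lies to the left of $z$, gives $C_1=0$. Evaluating $g$ at $x=L$ gives $C_1-\delta b_0=0$, so $\delta b_0=0$. This second step can be cross-checked against the mass balance $b_0=\gamma\int_0^L u\,\diff x$, obtained by integrating the Fickian steady-state equation with zero boundary flux; it determines $b_0$ from $u$ directly, exactly as in part~(iii) of \cref{lem:source-jump}. With both constants eliminated we have $g\equiv0$, that is, $\delta D\,u'=0$ on $(0,L)$.

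The main obstacle is to pass from $\delta D\,u'\equiv0$ to $\delta D\equiv0$ without \cref{ass:regularity} and without assuming $u'\neq0$ everywhere. I would argue that $u'$ can vanish only on a set with no interior, so that continuity of $\delta D$ finishes the argument.

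First, suppose $u'\equiv0$ on some open interval $I$ not containing $z$. Then $u''=0$ on $I$, and the equation $(Du')'=\gamma u$ gives $\gamma u=0$ there. This contradicts $u>0$ in the interior, since $\gamma>0$.

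Second, $u'$ cannot vanish identically near $z$ on either side, because it jumps there: $[u'](z)=-b_0/D(z)\neq0$.

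Hence the zero set of $u'$ has empty interior, $\delta D=0$ on a dense set, and continuity of $D$ and $\widetilde D$ gives $\delta D\equiv0$. If one prefers not to assume continuity of $D$, one can instead note that $Du'$ is continuous away from $z$ and strictly increasing, since $(Du')'=\gamma u>0$. It therefore vanishes at most once on each side of $z$, so $u'=0$ at only finitely many points and $\delta D=0$ almost everywhere. This is enough at the level of the PDE.

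Combining $\delta D\equiv0$ with $\delta b_0=0$ yields identifiability of $(D,b_0)$, with no regularity hypothesis needed at the source.
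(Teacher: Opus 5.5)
Your proposal is correct and follows the paper's proof essentially step for step. It uses the flux form $g=C_1-\delta b_0\,H(x-z)$ from \cref{lem:flux-gauge}, the two reflecting endpoints forcing $C_1=0$ and then $\delta b_0=0$, and the strict monotonicity of $Du'$ on each side of $z$ (from $(Du')'=\gamma u>0$) to conclude $\delta D\equiv0$ by continuity. Your separate ``second'' step appealing to the jump at $z$ is redundant, since the first step already covers every open interval avoiding $z$; on its own, the nonzero jump would not rule out $u'$ vanishing on one side.
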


\begin{proof}
Let another pair $(\widetilde D,\widetilde b_0)$ produce the same profile $u$, and set $\delta D=\widetilde D-D$ and $\delta b_0=\widetilde b_0-b_0$. By \cref{lem:flux-gauge}, the Fickian flux perturbation has the form
\[
g(x):=\delta D(x)\,u'(x)=C_1-\delta b_0\, H(x-z).
\]
Reflecting boundaries force $u'=0$ at both endpoints, so $g$ vanishes there. Since $z$ is interior, the left endpoint gives $C_1=0$, and the right endpoint gives $C_1-\delta b_0=0$. Hence $\delta b_0=0$, and therefore $g\equiv0$.

Thus $\delta D\,u'=0$ throughout the domain. Away from the source, the steady-state equation gives $(Du')'=\gamma u>0$, so $Du'$ is strictly increasing on each source-free interval and $u'$ has at most one zero there. Wherever $u'\neq0$ this forces $\delta D=0$, and continuity extends the conclusion across those isolated zeros, so $\delta D\equiv0$.
\end{proof}

The single-source analysis has two consequences used below. First, the stochastic interpretation affects what the snapshot can identify: It\^o-type conventions leave a second-order ambiguity in $f=\delta D\,u$, while Fickian diffusion leaves a first-order ambiguity in the flux perturbation $\delta D\,u'$. Boundary data therefore act on different objects. Second, source regularity decides whether a continuous kinked alternative is admissible: under known boundaries the It\^o alternatives are excluded by $C^1$ regularity at the source, while the Fickian reflecting case is stronger because the boundary has already removed the flux constant.
\section{Numerical experiments}
\label{sec:practical}

We test three physics-informed procedures, which differ in how they represent the solution and enforce the steady-state PDE and its boundary conditions. Discretize-then-optimize (DTO) differentiates through a finite-volume steady-state solver, so the PDE and boundary conditions hold at the discrete level and accuracy is tied to the mesh; numerical solvers of this kind can rival or outperform neural ones \cite{chatain2025NumericalPDESolvers}. A physics-informed neural network (PINN) represents the density and diffusivity as networks and imposes the PDE, jump, and boundary conditions as soft residual penalties, which is mesh-free but makes the fit depend on the penalty weights. Bilevel local operator learning (BiLO) learns a local solution operator for the current diffusivity, reducing its reliance on soft penalty weights without a fixed mesh. All three share a variable-projection outer loop that eliminates the source strength analytically. We adapt each to the singular source and give the details in \cref{sec:numerics}.

Structural identifiability is necessary but not sufficient for recovery from finite Poisson data, so we test two identifiable cases: It\^o diffusion with Dirichlet boundaries, where identifiability depends on source regularity (\cref{thm:ito-point-source}), and Fickian diffusion with reflecting boundaries, where the flux ambiguity is removed directly (\cref{cor:neumann-fickian}).

Each imaged cell carries its own geometric heterogeneity, so the realistic regime is a single steady-state snapshot per cell. We draw one snapshot per trial and report the median and 10th-to-90th-percentile recovery over 100 independent trials to gauge reliability rather than a single fit. Each scheme is adapted to the singular-source likelihood, and our aim is to test whether these structurally identifiable regimes are practically accessible across methods with different ways of representing the solution and enforcing the constraints. Modern training refinements, such as tailored weight initialization \cite{tarbiyati2024WeightInitializationAlgorithm}, adaptive collocation \cite{celaya2025AdaptiveCollocationPoint}, and other training heuristics \cite{wang2023ExpertsGuideTraining}, would likely improve the neural schemes, but we leave that optimization to future work.

\subsection{It\^o with Dirichlet boundaries}

Across 100 trials per method (\cref{fig:Combined_Ito}), recovery sharpens with the source strength. Because structural identifiability here depends on source regularity that no numerical method imposes exactly, whether it holds in practice remains to be tested numerically. At $b_0=250$ the snapshot is particle-poor (the source strength and degradation rate together set the expected number of particles, about $48$ per snapshot on average, a range comparable to single-gene transcript counts measured by smFISH \cite{dingConstitutiveSplicingEconomies2019}), and although $b_0$ is recovered reliably, the diffusivity estimate captures only coarse shape features near the source and flattens away from it, with wide 10th to 90th percentile bands that do not resolve the true $D(x)$. At $b_0=1000$, more particles (about $190$ per snapshot) improve all three methods' recovery of $D(x)$ and $b_0$: DTO is the most accurate, PINN returns a smoother $D(x)$, and BiLO trails on the higher-frequency profile.

\begin{figure}[tb]
    \centering
    \includegraphics[width=0.8\linewidth]{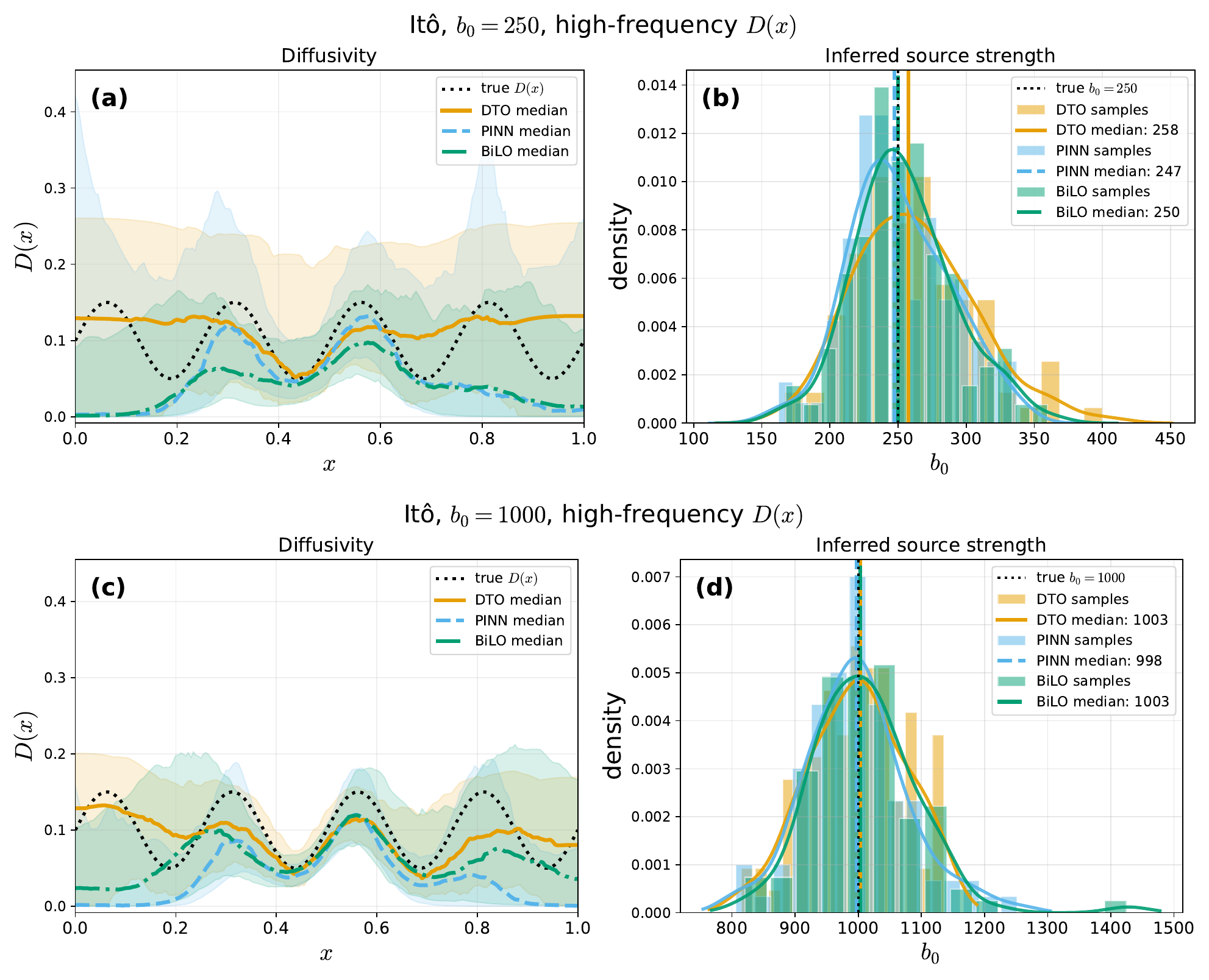}
    \caption{Recovery under It\^o dynamics with Dirichlet boundaries, for $D_{\mathrm{true}}(x) = 0.10 + 0.05\sin(8\pi x)$ at source strengths $b_0 = 250$ (\textbf{a},~\textbf{b}) and $b_0 = 1000$ (\textbf{c},~\textbf{d}). In panels (\textbf{a},~\textbf{c}) the dotted black curve is the true $D(x)$, the colored curves are the per-method median recovered $D(x)$, and the shaded bands span the 10th to 90th percentile over 100 trials. In panels (\textbf{b},~\textbf{d}) the bars are the distribution of the inferred source strength $b_0$ across trials, each normalized to unit area, with kernel density estimates drawn as smooth curves and vertical lines at the medians.
}
    \label{fig:Combined_Ito}
\end{figure}

\subsection{Fickian with reflecting boundaries}

In the Fickian--Neumann case (\cref{fig:Combined_Fickian_highfreq,fig:Combined_Fickian_lowfreq}) the kinked mode is absent, yet recovery of the same high-frequency target $D_{\mathrm{true}}(x)=0.10+0.05\sin(8\pi x)$ under the same hyperparameters is markedly worse: even at $b_0=1000$ (about $200$ particles per snapshot) only DTO tracks the oscillations, while PINN and BiLO smooth them out (\cref{fig:Combined_Fickian_highfreq}). Lowering the target to $\sin(6\pi x)$ makes the problem easier, and all three methods then recover it at both $b_0=250$ and $b_0=1000$, about $50$ and $200$ particles per snapshot (\cref{fig:Combined_Fickian_lowfreq}). Higher spatial frequency in $D(x)$ makes the heterogeneity harder to recover. Some smoothing is needed to select among observationally equivalent fields, but too much erases the heterogeneity itself, so the usable range of regularization is narrow and method-dependent.

\begin{figure}[tb]
    \centering
    \includegraphics[width=0.8\linewidth]{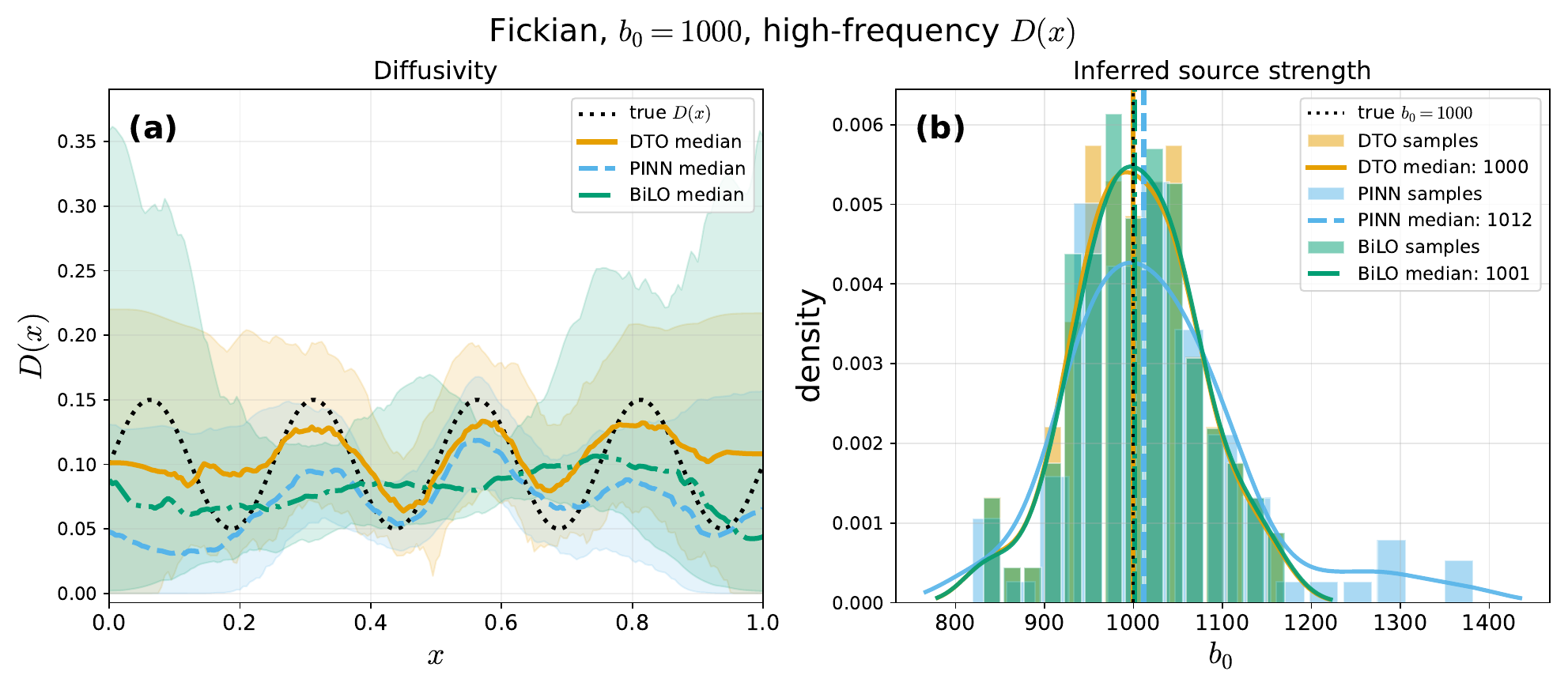}
    \caption{Recovery under Fickian dynamics with Neumann boundaries, for the high-frequency target $D_{\mathrm{true}}(x) = 0.10 + 0.05\sin(8\pi x)$ at $b_0 = 1000$. Even at this source strength only DTO tracks the oscillations of $D(x)$, while PINN and BiLO smooth them out. Plotting conventions follow \cref{fig:Combined_Ito}: (\textbf{a})~median and 10th to 90th percentile band of the recovered $D(x)$ over 100 trials, and (\textbf{b})~inferred birth rate $b_0$ with kernel density estimates.
}
    \label{fig:Combined_Fickian_highfreq}
\end{figure}

\begin{figure}[tb]
    \centering
    \includegraphics[width=0.8\linewidth]{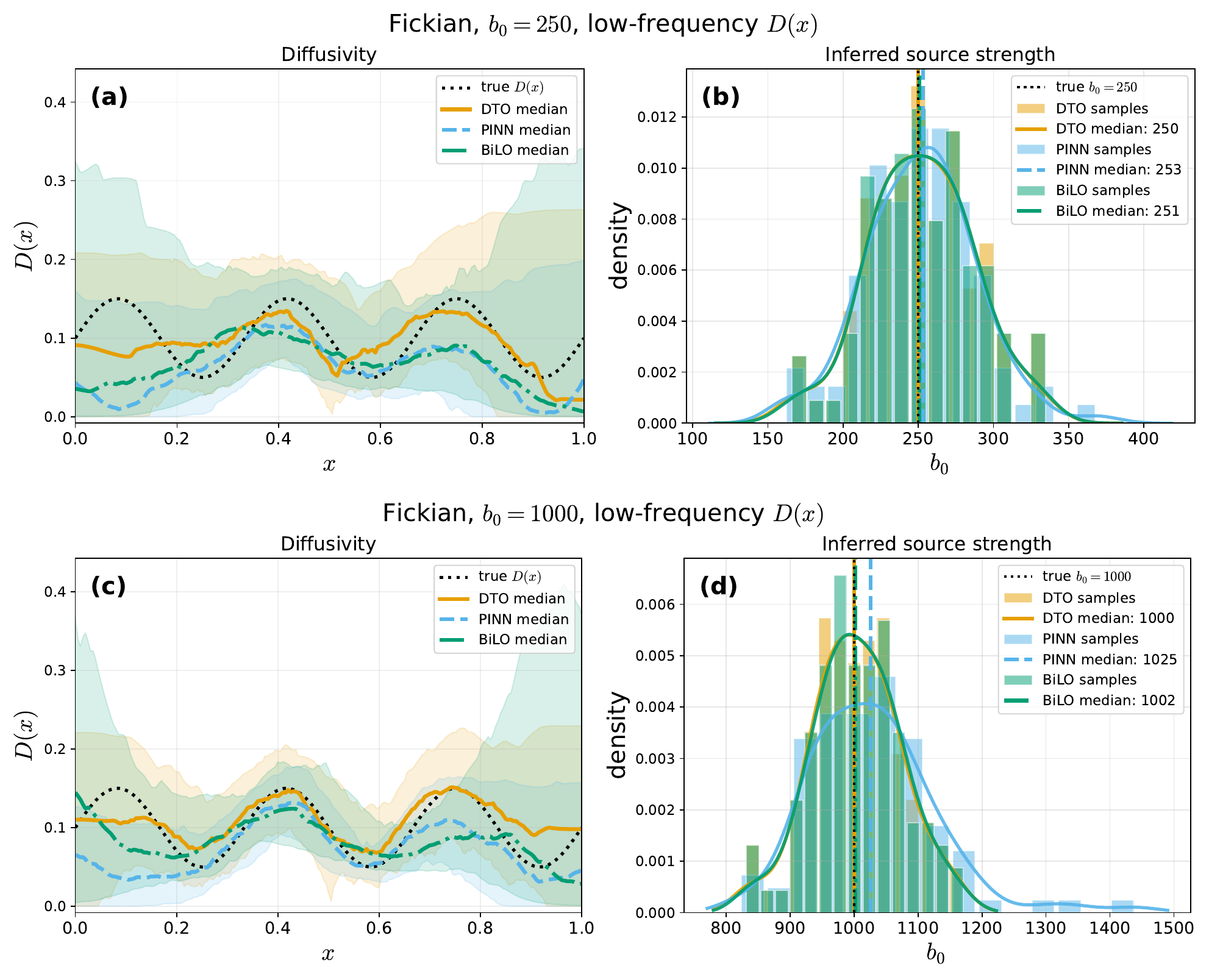}
    \caption{Recovery under Fickian dynamics with Neumann boundaries for the lower-frequency target $D_{\mathrm{true}}(x) = 0.10 + 0.05\sin(6\pi x)$ at $b_0 = 250$ (\textbf{a},~\textbf{b}) and $b_0 = 1000$ (\textbf{c},~\textbf{d}). All three methods recover the oscillations, and the reconstruction tightens as $b_0$ increases. Plotting conventions follow \cref{fig:Combined_Ito}.
}
    \label{fig:Combined_Fickian_lowfreq}
\end{figure}

Across these tests the three methods recover both $b_0$ and the spatial shape of $D(x)$, with more observed particles tightening both estimates. Two errors recur. The first is over-smoothing: too strong a regularization penalty erases the spatial variation of $D(x)$, and the right strength is case- and method-dependent. The second is a global scale error, with recovered profiles often satisfying $\hat D(x)\approx cD^\ast(x)$, so the shape of $D(x)$ is recovered more reliably than its absolute scale. This reflects a near scaling symmetry of the steady state. With $\gamma = 0$ the balance is homogeneous in $(D, b_0)$, so $(D, b_0)$ and $(cD, cb_0)$ produce the same $u$ and the scale is unidentifiable. Degradation breaks this symmetry, leaving the absolute scale the most weakly constrained aspect, especially in spatial regions where degradation is not the dominant mechanism of particle removal (e.g., boundary exit). For biological applications that aim to locate heterogeneous regions (chromatin domains, nuclear condensates, or interchromatin channels), this shape is the relevant quantity, and absolute diffusivity can be calibrated from complementary measurements.
\section{Unknown boundaries and additional constraints}
\label{sec:extended-structure}

The results so far assumed a known idealized boundary, either absorbing or reflecting. In the biologically standard setup the boundary is partially permeable, and the pore permeability is uncertain. We model this as a Robin boundary with unknown permeability, allowing in general the two endpoint values $\kappa_0,\kappa_L$ to differ. We first show that this extra unknown is by itself enough to make the parameters non-identifiable.

\subsection{Unknown Robin permeability}

Even for It\^o diffusion, an \emph{unknown} boundary permeability re-introduces a degeneracy. Take first a single symmetric permeability $\kappa_0=\kappa_L=\kappa$, the natural extension of the known-boundary cases above. Writing $f=\delta D\,u$, the inherited endpoint conditions
\[
\partial_n f+\delta\kappa\,u=0 \qquad\text{at } x=0,L,
\]
fix the two slopes of $f$ through the single unknown $\delta\kappa$, while the source corner
\[
[f'](z)=-\delta b_0
\]
links them. Eliminating $\delta\kappa$ gives
\[
\delta b_0=\delta\kappa\,\bigl(u(0)+u(L)\bigr).
\]
A nonzero $\delta\kappa$ therefore yields an admissible perturbation $\delta D=f/u$ (still $C^1$ at $z$, since the corner of $f$ cancels the jump in $u'$) that reproduces the same $u$, but only by changing the source strength $b_0$, exactly as in the Dirichlet point-source case. Thus a single unknown permeability leaves a one-parameter ambiguity between $D$ and $b_0$. If the endpoint permeabilities are allowed to differ, the ambiguity persists even when the source strength is fixed.

\begin{theorem}[Robin boundary non-identifiability]
\label{thm:robin-nonident}
For the one-dimensional It\^o model with a single point source and unknown endpoint Robin permeabilities $\kappa_0,\kappa_L$ in the flux condition \eqref{eq:robin-bc}, the parameter set $\{D(x), b_0, \kappa_0,\kappa_L\}$ is non-identifiable from observation of $u(x)$ alone. The degeneracy persists within the $C^1$ admissible class of \cref{ass:regularity}, even with the source strength $b_0$ held fixed: there is a nontrivial one-parameter perturbation of $D$ that is absorbed entirely by changes in the endpoint permeabilities.
\end{theorem}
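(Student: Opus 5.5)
Throughout I work with the It\^o variable $f=\delta D\,u$. I would construct an explicit nontrivial perturbation with $\delta b_0=0$ and show that it is admissible. Suppose $(D,b_0,\kappa_0,\kappa_L)$ and $(D+\delta D,b_0,\kappa_0+\delta\kappa_0,\kappa_L+\delta\kappa_L)$ produce the same $u$. Then \cref{lem:source-jump}(i) gives $f''=0$ away from $z$ and $[f'](z)=-\delta b_0=0$. Hence $f$ is a single affine function on $[0,L]$, $f(x)=a+cx$.

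Next I would subtract the Robin conditions \eqref{eq:robin-bc} at $\alpha=0$, namely $\partial_n(Du)+\kappa u=0$, for the two parameter sets. This gives $\partial_n f+\delta\kappa\,u=0$ at each endpoint, that is,
\[
-f'(0)+\delta\kappa_0\,u(0)=0,\qquad f'(L)+\delta\kappa_L\,u(L)=0 .
\]
So $\delta\kappa_0=c/u(0)$ and $\delta\kappa_L=-c/u(L)$. This requires $u(0),u(L)>0$, which holds for finite Robin permeability: if $u$ vanished at an endpoint, the Robin condition would make $(Du)'$ vanish there too, and uniqueness for the ODE would then force $u\equiv0$ near that endpoint.

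Conversely, take any $(a,c)\neq(0,0)$ and set
\[
\delta D=\frac{a+cx}{u},\qquad \delta\kappa_0=\frac{c}{u(0)},\qquad \delta\kappa_L=-\frac{c}{u(L)},\qquad \delta b_0=0 .
\]
The ambiguity equation \eqref{eq:ambiguity-ito} gives $(\delta D\,u)''=0$, so the interior equation, the point-source balance, and both boundary conditions all hold. The new parameters therefore reproduce $u$. This actually gives a two-parameter family; for the statement it suffices to fix $a$ and vary $c$ (or take $c=0$ with $\delta\kappa$'s vanishing, but then $a$ alone already works), and I would record the one-parameter family by fixing a direction $(a,c)$.

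Admissibility is the step needing care.

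\emph{Regularity at $z$.} Here $f$ is $C^\infty$ but $u$ has a corner at $z$, so $\delta D=f/u$ has the derivative jump $[\delta D'](z)=-f(z)[u'](z)/u(z)^2$. For $\delta D$ to be $C^1$ at $z$ I must impose $f(z)=0$, i.e.\ $a=-cz$, so $f=c(x-z)$. This is the main obstacle, and it is exactly why the degeneracy requires the boundary slopes to move independently, using the two distinct $\kappa$'s. With $f=c(x-z)$, the kink of $\delta D$ cancels, and since $[\delta D\,u']=0$ there, $[f']=0$ is consistent with (ii) of \cref{lem:source-jump} and with $\delta b_0=0$.

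\emph{Positivity.} For $|c|$ small, $\delta D$ is bounded because $u\ge\min u>0$ on $[0,L]$, so $D+\delta D>0$ holds. Likewise $\kappa_0+c/u(0)$ and $\kappa_L-c/u(L)$ stay nonnegative, taking $c$ of the appropriate sign if one endpoint permeability is zero.

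The result is the one-parameter family $\delta D=c(x-z)/u$, which is nonzero, $C^1$ at $z$, keeps $b_0$ fixed, and is absorbed by opposite-signed changes of $\kappa_0$ and $\kappa_L$. This proves the theorem.
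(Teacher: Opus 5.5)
Your proposal is correct and matches the paper's proof: you take $\delta b_0=0$ and the affine mode $f=c(x-z)$, so that $\delta D=c(x-z)/u$ is $C^1$ at $z$, and you absorb the endpoint slopes through $\delta\kappa_0=c/u(0)$ and $\delta\kappa_L=-c/u(L)$. Your derivation of why $f(z)=0$ is forced, and your checks that $u(0),u(L)>0$ and that positivity of $D$ and the permeabilities survives for small $|c|$, are useful details the paper leaves implicit.
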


\begin{proof}
Consider the one-dimensional domain $[0, L]$ with a point source at $z \in (0, L)$:
\begin{equation}
\LL_0 u - \gamma u + b_0 \delta(x - z) = 0.
\end{equation}
Integrating over the domain gives the global mass balance
\begin{equation}
b_0 = \gamma \int_0^L u(x) \, \diff x + \kappa_0 u(0) + \kappa_L u(L),
\end{equation}
in which the permeabilities $\kappa_0,\kappa_L$ appear only through the endpoint terms.

Set $\delta b_0 = 0$ and write $f = \delta D\, u$. By \cref{lem:source-jump}, $f'' = 0$ with no corner at $z$. Choose the homogeneous mode
\[
f(x) = c\,(x - z), \qquad \delta D(x) = \frac{c\,(x - z)}{u(x)}.
\]
Since $f(z) = 0$, the perturbation does not change the source jump.

It remains to check admissibility at the source. Although $u'$ jumps at $z$, the factor $(x-z)$ cancels the induced corner:
\[
\delta D'(x) = \frac{c}{u(x)} - \frac{c\,(x-z)\,u'(x)}{u(x)^2},
\]
so $\delta D'(z^-) = \delta D'(z^+) = c/u(z)$, and \cref{ass:regularity} does not exclude the perturbation.

Finally, subtract the two flux conditions \eqref{eq:robin-bc}. This gives $\partial_n f + \delta\kappa\, u = 0$ at each endpoint, so the permeability perturbations
\[
\delta\kappa_0 = \frac{c}{u(0)}, \qquad \delta\kappa_L = -\frac{c}{u(L)}
\]
make the perturbed boundary conditions hold, and the mass balance is unchanged because $\delta\kappa_0\, u(0) + \delta\kappa_L\, u(L) = c - c = 0$. Thus every sufficiently small $c$ gives a distinct admissible parameter set with the same profile $u$.
\end{proof}
\Cref{fig:Ito_unknownRobin_pointSource} shows the degeneracy for the It\^o model with a point source at $z = 0.5$ and unknown endpoint permeabilities. Starting from $D_1(x) = 2 + 0.5\sin(2\pi x)$, $b_0 = 10$, and $\kappa_0 = \kappa_L = 2$, a different diffusivity $D_2$ and shifted permeabilities reproduce the same density $u_2 \equiv u_1$ \emph{with the source strength unchanged}. Because $\delta D$ is smooth at $z$, the $C^1$ regularity that ensures identifiability under known boundaries (\cref{thm:ito-point-source}) does not help here.

\begin{figure}[tb]
    \centering
    \includegraphics[width=\linewidth]{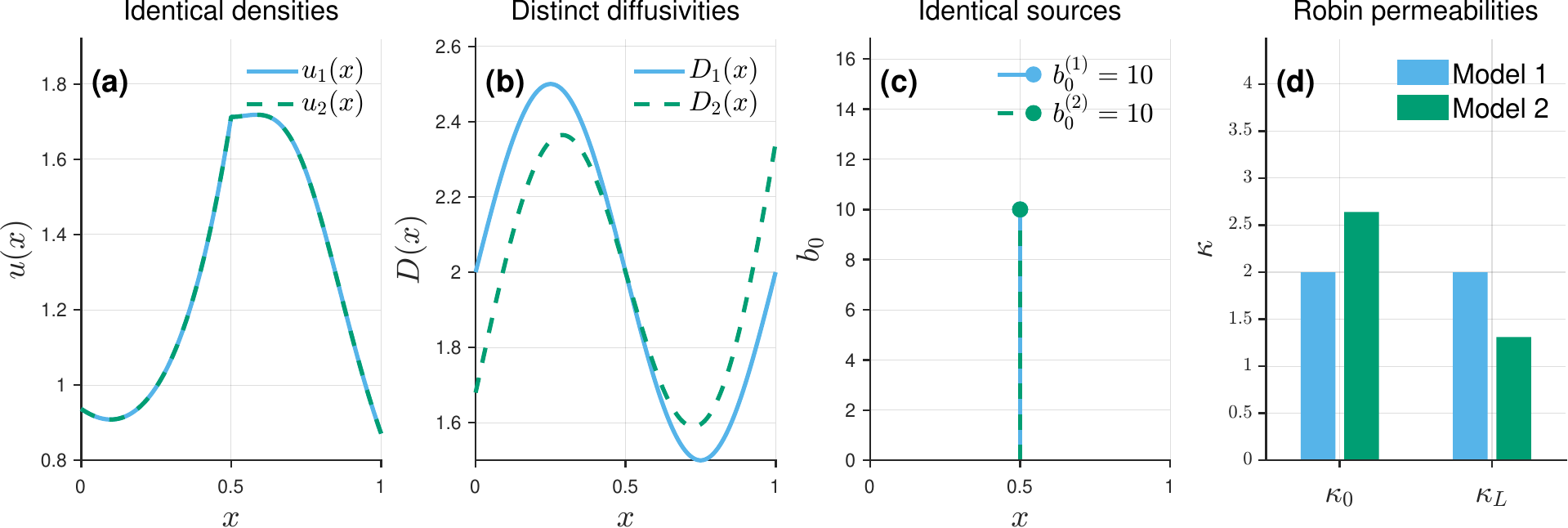}
    \caption{Same-density alternative for It\^o diffusion with unknown Robin permeabilities and fixed source strength. Here $D_1(x) = 2 + 0.5\sin(2\pi x)$, $b_0 = 10$, $z = 0.5$, $\gamma = 5$, and $\kappa_0 = \kappa_L = 2$; the perturbation uses flux offset $c = 0.6$ and follows the construction in \cref{thm:robin-nonident}. Panels show (a)~the common density, (b)~the indistinguishable diffusivities $D_1$ and $D_2$, (c)~the unchanged point source, and (d)~the alternative endpoint permeabilities that absorb the flux offset.
}
    \label{fig:Ito_unknownRobin_pointSource}
\end{figure}

Additional structure can nonetheless remove this ambiguity in several ways we identify here. These include multiple point sources with shared intensity (\cref{sec:multi-source}), a downstream species supplying a distributed constraint (\cref{sec:downstream-species}), and repeated steady states (\cref{sec:multiple-steady-states}). We give proof sketches for the first two and a full proof for the last, worked out for particular choices of $\alpha$, though we expect identifiability to hold more generally.

\subsection{Multi-source shared-strength constraints}
\label{sec:multi-source}

The simplest setting with unknown boundaries is to observe the system under multiple point sources that share a common intensity.

\begin{proposition}[Multi-source shared-strength constraints]
\label{thm:multi-source}
Consider the one-dimensional It\^o model with a single unknown Robin permeability $\kappa$, unknown diffusivity $D(x)$, and $N \geq 2$ point sources at distinct known locations $0 < z_1 < \cdots < z_N < L$, all with the same unknown strength $b_0$. Under \cref{ass:regularity}, the parameter set $(D, b_0, \kappa)$ is identifiable from the steady-state profile $u$, except in exceptional source configurations.
\end{proposition}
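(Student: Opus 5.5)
We follow the single-source It\^o argument (\cref{thm:ito-point-source}), but now keep track of a finite list of unknown constants. Suppose a second admissible triple $(D+\delta D,\, b_0+\delta b_0,\, \kappa+\delta\kappa)$ produces the same $u$, and set $f=\delta D\,u$.

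\textbf{Step 1: reduce $f$ to four constants.} By part~(i) of \cref{lem:source-jump}, $f$ is continuous and piecewise linear with corners $[f'](z_i)=-\delta b_0$. The corner is the same at every source because the strength is shared. So $f$ is fixed by four numbers: $a=f(0)$, the initial slope $s_0=f'(0^+)$, $\delta b_0$ and $\delta\kappa$. On the $k$-th subinterval the slope is $s_k=s_0-k\,\delta b_0$.

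\textbf{Step 2: collect the linear constraints.}
\begin{itemize}
\item Subtracting the two Robin conditions \eqref{eq:robin-bc} gives $\partial_n f+\delta\kappa\,u=0$ at both endpoints, exactly as in the proof of \cref{thm:robin-nonident}. This reads $s_0=\delta\kappa\,u(0)$ and $s_0-N\delta b_0=-\delta\kappa\,u(L)$.
\item By part~(ii) of \cref{lem:source-jump}, which uses \cref{ass:regularity} at every $z_i$, we get $f(z_i)=\delta b_0\,w_i$ for $i=1,\dots,N$. Here $w_i:=D(z_i)u(z_i)/b_0=u(z_i)/\lvert[u'](z_i)\rvert$.
\end{itemize}
This is a homogeneous linear system of $N+2$ equations in the four unknowns $(a,s_0,\delta b_0,\delta\kappa)$. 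Its coefficients depend only on the observed $u$ and the known locations $z_i$.

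\textbf{Step 3: eliminate down to $\delta b_0$.} The two boundary equations give $\delta\kappa\,(u(0)+u(L))=N\delta b_0$, and hence
\[
s_0=\frac{N\,u(0)}{u(0)+u(L)}\,\delta b_0=:\rho\,\delta b_0 .
\]
Next compare the source conditions at $z_1$ and $z_2$. Between them the slope is $s_1=s_0-\delta b_0$, so $f(z_2)-f(z_1)=(s_0-\delta b_0)(z_2-z_1)$. Setting this equal to $\delta b_0(w_2-w_1)$ gives
\[
\delta b_0\Bigl[(\rho-1)(z_2-z_1)-(w_2-w_1)\Bigr]=0 .
\]

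\textbf{Step 4: conclude.} If the bracket is nonzero, then $\delta b_0=0$. This forces $s_0=0$ and $\delta\kappa=0$, and the condition at $z_1$ gives $a=0$. So $f\equiv0$, and since $u>0$ on $[0,L]$ (including the endpoints under Robin conditions), $\delta D\equiv0$. For $N>2$ each consecutive pair $(z_i,z_{i+1})$ gives an analogous scalar condition, with $\rho-1$ replaced by $\rho-i$. A nontrivial alternative needs all of them to vanish at once, so one non-vanishing bracket suffices.

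The exceptional configurations are those where every such bracket vanishes. These form a codimension-one relation between the $z_i$ and the observable quantities $u(0)$, $u(L)$ and $w_i$, and they are the exceptions named in the statement.

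It is worth noting why the degenerate mode of \cref{thm:robin-nonident} is excluded. There, the two endpoints could use different $\delta\kappa$; here the single $\kappa$ ties them together. The one-parameter $\delta b_0$-mode of the symmetric single-source case is now overdetermined, because the shared strength contributes the same corner at every source while each source imposes its own value condition.

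\textbf{Main obstacle.} The hard part is making ``except in exceptional source configurations'' precise. I would first show that the bracket is not identically zero as a function of $z_2$ with $z_1$ fixed. The term $(\rho-1)(z_2-z_1)$ is linear in $z_2$, whereas $w_2=u(z_2)/\lvert[u'](z_2)\rvert$ varies nonlinearly. One complication is that $u$ itself depends on the source locations, so I expect to settle this through the explicit two-source Green's-function solution for constant $D$, and then appeal to analyticity or genericity in $(z_1,z_2)$. Failing that, I would state the exceptional set directly as the vanishing of the bracket.
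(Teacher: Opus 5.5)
Your proposal is correct and is essentially the paper's own sketch. Both arguments use the piecewise-linear $f=\delta D\,u$ with equal corners $-\delta b_0$ and eliminate the single $\delta\kappa$ to get $f'(0^+)=\rho\,\delta b_0$ with $\rho=\frac{N u(0)}{u(0)+u(L)}$. Both impose the source values $f(z_i)=\delta b_0\,w_i$ (the paper's $q_i$) and arrive at the same first compatibility relation $w_2-w_1=(\rho-1)(z_2-z_1)$. Like the paper, you leave the genericity of the exceptional set unproven.

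Two small refinements are worth making. First, for $N>2$ the exceptional set is cut out by all $N-1$ bracket conditions jointly, not by a single codimension-one relation. Second, for $N=2$ every mirror-symmetric configuration ($D(x)=D(L-x)$, $z_2=L-z_1$) gives $\rho=1$ and $w_1=w_2$, so the bracket vanishes there. Such configurations are therefore always exceptional, and your constant-$D$ genericity check should be done away from symmetric source placements.
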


A single shared $\delta b_0$ must satisfy every source jump at once. By \cref{lem:source-jump}, the observed profile fixes the values of $f = \delta D\,u$ at the source points up to this one scale $\delta b_0$. The single Robin permeability fixes the endpoint slopes of $f$ through one further scalar. With $N \geq 2$ sources these conditions overdetermine $f$: a nonzero perturbation survives only if the source data satisfy a finite set of relations, made explicit in \cref{app:proof-thm5}, that the PDE and boundary condition do not impose. Such relations are pathological coincidences with no physical significance. Away from them $f \equiv 0$, so $\delta D \equiv 0$, $\delta b_0 = 0$, and $\delta\kappa = 0$.

In cells during G2 phase (after DNA replication, before mitosis), each gene exists in two copies on sister chromatids at distinct spatial locations, producing mRNA at transcription rates that may be approximately the same. Imaging such cells would then provide the two approximately equal-strength sources the proposition requires.

\subsection{Downstream product constraints}
\label{sec:downstream-species}

When only a single point source is available, identifiability with unknown boundaries can be recovered by observing a second, \emph{downstream} species whose source term is determined by the first. The precursor $u(x)$ is produced at a point source and converts to a product $v(x)$, for example nascent pre-mRNA maturing into spliced mRNA, at rate $k$; both species diffuse and degrade. The governing equations are
\begin{align}
\LL_0^{(u)} u - (\gamma_u + k)\, u + b_0\,\delta(x - z) &= 0, \label{eq:precursor} \\
\LL_0^{(v)} v - \gamma_v\, v + k\, u(x) &= 0, \label{eq:product}
\end{align}
where $\LL_0^{(u)}$ and $\LL_0^{(v)}$ denote It\^o diffusion operators with coefficients $D_u(x)$ and $D_v(x)$, respectively.

\begin{proposition}[Downstream product constraints]
\label{thm:precursor-product}
Suppose both steady-state profiles $u(x)$ and $v(x)$ are observed, the kinetic rates $k$, $\gamma_u$, and $\gamma_v$ are known, and the diffusion coefficients satisfy a known structural relationship $D_v(x)=\rho D_u(x)$ for a known constant $\rho>0$. The parameters $D_u(x)$, $b_0$, and $\kappa$ are identifiable even with a single point source and unknown Robin permeability, except in exceptional profile pairs $(u,v)$.
\end{proposition}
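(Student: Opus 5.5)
The plan is to compare two parameter sets $(D_u,b_0,\kappa)$ and $(\widetilde D_u,\widetilde b_0,\widetilde\kappa)$ that reproduce the same observed pair $(u,v)$, and to track the two It\^o perturbation variables
\[
f=\delta D_u\,u,\qquad h=\delta D_v\,v=\rho\,\delta D_u\,v=\rho f\,\frac{v}{u}.
\]
The second identity is the key coupling: because $D_v=\rho D_u$ with $\rho$ known, the product perturbation is not a new unknown but is determined by $f$ and the observed ratio $v/u$. I assume both species share the same Robin permeability $\kappa$ in the flux condition \eqref{eq:robin-bc}; the same argument should go through with a separate known-ratio permeability.

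\textbf{Step 1 (precursor).} Apply \cref{lem:source-jump} to \eqref{eq:precursor}. The precursor equation alone is exactly the setting of \cref{thm:robin-nonident}, so $f$ is piecewise linear, $f=a_\pm+c_\pm x$ on either side of $z$. Continuity of $f$, the corner $[f'](z)=-\delta b_0$, \cref{ass:regularity} (which gives $\delta D_u(z)=\delta b_0 D_u(z)/b_0$), and the two endpoint conditions $\partial_n f+\delta\kappa\,u=0$ leave a small finite-dimensional family. I would parametrise it explicitly by $(\delta b_0,\delta\kappa,c)$, or by whatever reduced set remains after imposing all of these linear conditions.

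\textbf{Step 2 (product).} The product source $k u$ is known and unchanged, so subtracting the two versions of \eqref{eq:product} gives $\delta B_v=0$. The It\^o ambiguity equation then reads $h''=0$ on all of $(0,L)$, with no corner at $z$, since $v$ has no point source and $v\in C^1$. Hence $h$ is a single global affine function. Its Robin conditions give $\partial_n h+\delta\kappa\,v=0$ at both endpoints, with the same $\delta\kappa$.

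\textbf{Step 3 (overdetermination).} Substituting $h=\rho f\,v/u$ into $h''=0$ on each side of $z$ gives
\[
\bigl((a_\pm+c_\pm x)\,w\bigr)''=0,\qquad w:=v/u .
\]
Equivalently, $w''(a+cx)+2cw'=0$ on each subinterval. This is a linear condition on finitely many constants that must hold pointwise on an interval. Unless $w$ itself lies in the span of $1/(a+cx)$-type functions, that is unless $(a+cx)w$ is affine, it forces $a_\pm=c_\pm=0$. Then $f\equiv0$, so $\delta D_u\equiv0$, and the corner and boundary relations give $\delta b_0=0$ and $\delta\kappa=0$.

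The main obstacle is making "exceptional profile pairs" precise and showing they are genuinely exceptional. I would state the exceptional set as the set of pairs for which $v/u$ coincides on a subinterval with a ratio $(\alpha+\beta x)/(a+cx)$ compatible with all the boundary and jump relations. I would then argue that the steady-state equations do not force this: for instance, the ODE satisfied by $v$ with source $ku$ would have to be consistent with $v$ being a M\"obius multiple of $u$, which I expect imposes an extra relation on $(D_u,\gamma_u,\gamma_v,k,\rho)$ that fails generically. A direct alternative is to count conditions. The unknowns $(\delta b_0,\delta\kappa,c)$ are finitely many, while the product constraint is a full functional identity, so nontriviality requires the vanishing of a nontrivial determinant built from values of $u,v,u',v'$ at $0$, $z$, $L$. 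That determinant's zero set is the exceptional set, in the same spirit as \cref{thm:multi-source}.
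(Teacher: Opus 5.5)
Your proposal is correct and follows essentially the paper's own sketch. The paper also pairs the piecewise-linear precursor variable $f_u=\delta D_u\,u$ with the globally affine product variable $\psi_v=\delta D_u\,v$ (your $h/\rho$), and it places the exceptional set exactly where an affine function times the observed ratio $u/v$ becomes piecewise affine with the required jump and Robin matching; you instead parametrize from the precursor side, requiring $f\,v/u$ to be affine, which is an equivalent and purely cosmetic reformulation.
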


The product equation \eqref{eq:product} has source $S_v(x)=k\,u(x)$, which is observed once $u$ is observed, so it cannot be perturbed independently. The precursor perturbation $f_u=\delta D_u\,u$ is piecewise linear with a corner at the source. The matching product perturbation $\psi_v=\delta D_u\,v$ is globally linear, since the product source is the observed $k\,u(x)$. Both come from the same $\delta D_u$, so $f_u/u=\psi_v/v$. A nontrivial ambiguity would then need a globally linear $\psi_v=ax+b$ for which $(ax+b)\,u/v$ reproduces the source corner and boundary behavior the precursor requires, an exceptional condition on the observed ratio $u/v$ made explicit in \cref{app:proof-thm6}.

This argument assumes that the conversion rate $k$, the species-specific degradation rates $\gamma_u$ and $\gamma_v$, and the structural relationship between diffusion coefficients are known externally. In dual-color imaging of unspliced and spliced mRNA, these rates could in principle be estimated by orthogonal experiments, and $D_v=\rho D_u$ represents the assumption that both species experience the same physical heterogeneities but may diffuse differently. Treating these quantities as known is a genuine limitation, and future work could investigate to what extent they can themselves be relaxed and inferred jointly with $D_u$.

\subsection{Multiple steady-state observations}
\label{sec:multiple-steady-states}

Beyond the single-snapshot scenarios above, identifiability can be restored if two distinct steady states are observed from the same underlying diffusivity. We state this result for Fickian diffusion, where the repeated profiles directly overdetermine the flux perturbation.

\begin{proposition}[Multiple steady-state observations]
\label{thm:dual-stimulus}
Consider the one-dimensional Fickian model ($\alpha = 1$) with unknown diffusivity $D(x)$, observed at two distinct steady-state profiles $u_1(x)$ and $u_2(x)$ produced under different source configurations with known locations and unknown strengths. If the ratio $u_1'/u_2'$ is non-constant on every interval free of sources from either configuration, then $D(x)$ and the source strengths are identifiable from $u_1$ and $u_2$. If the boundary permeability is an unknown Robin parameter, it is identifiable as well.
\end{proposition}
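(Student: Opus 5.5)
We follow the strategy used for \cref{thm:fickian-nonident} and \cref{cor:neumann-fickian}: work with the Fickian flux perturbations, now one for each observed profile. Suppose a second admissible parameter set $(\widetilde D, \widetilde b^{(1)}, \widetilde b^{(2)}, \widetilde\kappa_0,\widetilde\kappa_L)$ reproduces both $u_1$ and $u_2$, and set $\delta D=\widetilde D-D$. Applying \cref{lem:flux-gauge} to each configuration separately gives
\[
g_j(x):=\delta D(x)\,u_j'(x)=C^{(j)}-\sum_i \delta b_i^{(j)}\,H\bigl(x-z_i^{(j)}\bigr),\qquad j=1,2.
\]
Each $g_j$ is therefore piecewise constant, with breaks only at that configuration's own sources. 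The key point is that both $g_j$ involve the same $\delta D$.

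\emph{Step 1: eliminate $\delta D$ on source-free intervals.} Take the union of the two source sets and consider any open interval $I$ of its complement. On $I$ both $g_1$ and $g_2$ are constant, say $g_1\equiv a$ and $g_2\equiv b$. At points where $u_1'u_2'\neq 0$ we have
\[
\delta D=\frac{a}{u_1'}=\frac{b}{u_2'},\qquad\text{so}\qquad a\,u_2'=b\,u_1'\ \text{on } I.
\]
Suppose $(a,b)\neq(0,0)$. Then $u_1'/u_2'$ (or its reciprocal) would equal a constant on a subinterval of $I$. On each side of a source, $u_j'$ solves the second-order ODE $(Du_j')'=\gamma u_j$ and is analytic relative to $D$; more elementarily, $(Du_j')'=\gamma u_j>0$ shows that $Du_j'$ is strictly increasing, so $u_j'$ has at most one zero on $I$. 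Hence the ratio is well defined off finitely many points, and the relation $a u_2'=b u_1'$ holds on an open subinterval. By hypothesis the ratio is non-constant on every such interval. We intend to use that hypothesis on an arbitrary open subinterval; if it is only available on all of $I$, the constant-ratio relation has to be propagated from the subinterval to all of $I$. This propagation is the step we expect to be delicate, since the hypothesis must be read as applying to subintervals, or continuation must be used. Granting it, $a=b=0$, so $g_1\equiv g_2\equiv0$ on $I$. Then $\delta D=0$ wherever $u_1'\neq0$ on $I$, and by continuity $\delta D\equiv0$ on $I$.

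\emph{Step 2: glue across sources and recover the strengths.} Every interval of the complement gives $\delta D=0$, so $\delta D\equiv0$ almost everywhere, and hence everywhere by continuity. It follows that $g_1\equiv g_2\equiv 0$. At each source $z_i^{(j)}$ the jump of $g_j$ is $-\delta b_i^{(j)}$, and this jump must vanish. So every $\delta b_i^{(j)}=0$, and each constant $C^{(j)}=0$. This step does not use \cref{ass:regularity}: the two profiles already remove the flux constants, in the same way that the reflecting boundary does in \cref{cor:neumann-fickian}.

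\emph{Step 3: the Robin permeability.} Subtract the two Robin conditions \eqref{eq:robin-bc} with $\alpha=1$, using $\delta D\equiv0$. This leaves $\delta\kappa_0\,u_j(0)=0$ and $\delta\kappa_L\,u_j(L)=0$. With finite permeability and $\gamma>0$, the steady state satisfies $u_j>0$ on the closed interval; a Hopf-type argument, or explicitly the 1D ODE solution, gives $u_j(0),u_j(L)>0$. Therefore $\delta\kappa_0=\delta\kappa_L=0$. The main obstacle is the case analysis in Step 1: handling the isolated zeros of $u_j'$ and making precise how the hypothesis rules out a constant ratio on the relevant subintervals.
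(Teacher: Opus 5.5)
Your proposal is correct and follows essentially the paper's proof: you use piecewise-constant flux perturbations $g_j=\delta D\,u_j'$ on each source-free interval, combine the non-constant-ratio hypothesis with $(Du_j')'=\gamma u_j>0$ (so $u_j'$ cannot vanish on an interval) to force $g_1=g_2=0$, then read off the source-strength perturbations from the jumps and $\delta\kappa$ from the Robin condition with $u_j>0$ at the endpoints. The step you flag as delicate is not delicate: $a\,u_2'=\delta D\,u_1'u_2'=b\,u_1'$ holds pointwise on all of $I$, not just on a subinterval, and any subinterval of a source-free interval is itself source-free, so the hypothesis applies directly and no propagation is needed.
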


\begin{proof}
Suppose another admissible parameter set produces the same two profiles $u_1,u_2$, and let $\delta D$ be the common diffusivity perturbation. The source strengths are unknowns and may differ between the two sets. By \cref{lem:flux-gauge}, each profile gives a piecewise-constant flux perturbation
\begin{equation}
g_1 = \delta D\, u_1', \qquad g_2 = \delta D\, u_2',
\end{equation}
jumping only across a source, by its perturbed strength $\delta b$. On any source-free interval $g_1$ and $g_2$ are constant, so if both are nonzero the ratio $u_1'/u_2' = g_1/g_2$ would be constant there, contrary to the hypothesis. If instead one of them vanishes, say $g_1 = 0$, then $\delta D\,u_1' = 0$. Since $(Du_1')' = \gamma u_1 > 0$ away from the source, $u_1'$ cannot vanish on an interval, so $\delta D = 0$ there by continuity, whence $g_2 = 0$ too. Either way $g_1 = g_2 = 0$, forcing $\delta D \equiv 0$.

The jumps of $g_j$ across the sources are the source-strength perturbations, which therefore vanish. Finally, with $\delta D \equiv 0$, subtracting the Robin boundary condition gives $\delta\kappa\, u_j = 0$ at the boundary, and since the steady states are positive there, $\delta\kappa = 0$. The ambiguity is removed in the interior, independently of the boundary, which is why this setting tolerates an unknown permeability.
\end{proof}

Obtaining two steady states from the same fixed cell is challenging, but emerging single-molecule live-cell RNA imaging \cite{xiaSinglemoleculeLivecellRNA2025} makes repeated and time-resolved observation increasingly feasible, for example through optogenetic activation of transcription at different loci or pre- and post-stimulation snapshots. Time courses of this kind, or rapid fixation at multiple time points after perturbation, would add constraints beyond the steady state that we do not analyze here. A population analogue would replace repeated steady states of one cell with single snapshots from many cells. This requires $D(x)$ itself to follow a shared statistical law across cells, for instance similar regions of elevated or reduced diffusivity. The exact spatial profile of $D(x)$ is unlikely to be shared between cells, so heterogeneity of this kind would have to be modeled explicitly rather than treated as repeated observations of one $D(x)$.

\section{Discussion}
\label{sec:discussion}

Fitting the observed steady-state density is not enough to identify the underlying coefficients. Identifiability is controlled by the source topology, stochastic convention, boundary information, and source regularity. A surprising feature is that the stochastic convention matters at all, entering through the difference in order between the It\^o and Fickian perturbation equations. It\^o-type models constrain $f=\delta D\,u$, while Fickian diffusion constrains $g=\delta D\,u'$.

The results separate into the regimes collected in \cref{tab:summary}. Smooth diffuse sources are non-identifiable under every convention, since source variation can be traded against diffusivity variation. For a single point source with known boundary data, the convention fixes which object the snapshot constrains, and the boundary condition and source regularity decide whether its remaining constants are fixed. With unknown Robin permeability, even a single point source is non-identifiable, but shared-intensity sources, a precursor-product pair, or repeated Fickian steady states remove the ambiguity under the stated conditions. The detailed It\^o-type results are proved for $\alpha=0$; for $\alpha<1$ we expect the same second-order mechanism to apply (\cref{rem:intermediate-conventions}).

\begin{table}[tb]
\centering
\scriptsize
\setlength{\tabcolsep}{3pt}
\renewcommand{\arraystretch}{1.3}
\caption{Summary of identifiability results. Rows with two outcomes distinguish continuous $D$ from $D$ differentiable at the point source.}
\label{tab:summary}
\begin{tabular}{@{}
>{\raggedright\arraybackslash}m{0.098\linewidth}
>{\raggedright\arraybackslash}m{0.095\linewidth}
>{\raggedright\arraybackslash}m{0.098\linewidth}
>{\raggedright\arraybackslash}m{0.530\linewidth}
>{\raggedright\arraybackslash}m{0.095\linewidth}@{}}
\toprule
\textbf{Convention} & \textbf{Source} & \textbf{Boundary} & \textbf{Identifiability} {\normalfont\itshape(\xmark~unidentifiable, \cmark~identifiable)} & \textbf{Result} \\
\midrule
{\color{tableGray}Any $\alpha$}
& diffuse source
& {\color{tableGray}any}
& \xmark\ source--diffusion confounding
& Thm.~\ref{thm:diffuse-source} \\
\midrule
{\color{okabeBlue}Fickian\newline($\alpha=1$)}
& point source
& {\color{okabeOrange}Dirichlet (absorbing)}
& \xmark\ continuous $D$: kinked flux alternative\newline
\cmark\ differentiable $D$ at the source
& Thm.~\ref{thm:fickian-nonident} \\
\addlinespace[0.25em]
{\color{okabeBlue}Fickian\newline($\alpha=1$)}
& point source
& {\color{okabeGreen}Neumann (reflecting)}
& \cmark\ no source regularity needed
& Cor.~\ref{cor:neumann-fickian} \\
\midrule
{\color{okabePurple}It\^o-type\newline($\alpha<1$)}
& point source
& {\color{okabeOrange}Dirichlet (absorbing)}
& \xmark\ continuous $D$: kinked source alternative\newline
\cmark\ differentiable $D$ at the source
& Rem.~\ref{rem:near-null}; Thm.~\ref{thm:ito-point-source} \\
\addlinespace[0.25em]
{\color{okabePurple}It\^o-type\newline($\alpha<1$)}
& point source
& {\color{okabeGreen}Neumann (reflecting)}
& \xmark\ continuous $D$: kinked alternative; $b_0$ fixed by mass balance\newline
\cmark\ differentiable $D$ at the source
& Rem.~\ref{rem:near-null}; Thm.~\ref{thm:ito-point-source} \\
\addlinespace[0.25em]
{\color{okabePurple}It\^o-type\newline($\alpha<1$)}
& point source
& {\color{okabeSky}unknown Robin}
& \xmark\ even with source regularity; permeability absorbs the flux offset
& Thm.~\ref{thm:robin-nonident} \\
\midrule
{\color{okabePurple}It\^o-type\newline($\alpha<1$)}
& $N\geq2$ sources, shared $b_0$
& {\color{okabeSky}unknown Robin}
& \cmark\ shared strength ties the jump conditions together
& Prop.~\ref{thm:multi-source} \\
\addlinespace[0.25em]
{\color{okabePurple}It\^o-type\newline($\alpha<1$)}
& downstream species
& {\color{okabeSky}unknown Robin}
& \cmark\ product source is observed through the precursor
& Prop.~\ref{thm:precursor-product} \\
\addlinespace[0.25em]
{\color{okabeBlue}Fickian\newline($\alpha=1$)}
& two steady states, same $D$
& {\color{tableGray}any}
& \cmark\ independent profiles overdetermine the flux ambiguity
& Prop.~\ref{thm:dual-stimulus} \\
\bottomrule
\end{tabular}
\end{table}

These results place spatial single-molecule inference within a broader class of snapshot-inference problems. Related examples include fundamental limits in expression-state dynamics \cite{weinreb2018FundamentalLimitsDynamic} and movement--growth confounding in cross-sectional cell-state data \cite{zhang2025InferringStochasticDynamics}. In the spatial setting studied here, the confounding is between diffusivity, source structure, and boundary exchange, and distinct mechanisms can generate the same observed snapshot. The ambiguity calculations also show what kind of extra information is useful. Time-resolved or otherwise structured measurements would add independent constraints \cite{miles2025MechanisticInferenceStochastic}, while in the steady-state setting shared source strengths, product species, or repeated stimuli constrain the same ambiguity variables exposed by the single-snapshot analysis.

Inference is therefore possible but delicate where the model is identifiable, and the numerical experiments suggest that DTO, PINN, and BiLO can recover the spatial profile of $D(x)$ from a single point pattern. This profile is recovered more reliably than the absolute scale. Such profiles may still be valuable for mapping subcellular heterogeneity, for example around chromatin domains, nuclear condensates, or nuclear speckles that shape local RNA processing \cite{wuDynamicsRNALocalization2024}. This connects to spatial point-process methods that infer spatial structure directly from molecular coordinates \cite{vihrs2022UsingNeuralNetworks, okabe2025MappingIntracellularDynamics}, increasingly at subcellular resolution \cite{wang2025ELLAModelingSubcellular}. The role of the mechanistic model is to say when such spatial structure can be interpreted as a transport coefficient, a source field, a boundary effect, or only an equivalence class of these. In this sense, the paper extends classical identifiability questions for finite parameter vectors \cite{simpson2020PracticalParameterIdentifiability, browning2020IdentifiabilityAnalysisStochastic, ciocanel2024ParameterIdentifiabilityPDE} to functional unknowns observed through point patterns.

Our proofs are one-dimensional, where steady states and alternative parameter sets can be written explicitly. In two and three dimensions, point sources produce singular solutions, source regularity requires a more careful formulation, and anisotropic diffusion becomes relevant. We expect the ambiguity-operator perspective to carry over, but the details are open. We have also assumed first-order degradation; nonlinear reactions such as binding, dimerization, or saturation may introduce new identifiable or non-identifiable combinations. On the statistical side, the simulations suggest robust shape recovery, but Fisher-information, stability, and sample-complexity bounds remain to be developed. Existing stability theory typically assumes a known regular source \cite{bonitoDiffusionCoefficientsEstimation2017}, whereas the source is one of the unknowns here.

Finally, the stochastic convention is part of the model specification, not something recovered by the single snapshot. A flexible field $D(x)$ can realize the It\^o-type or Fickian ambiguities and still fit the point pattern, so the choice of convention should come from microscopic or macroscopic physics rather than from the steady-state fit alone. That choice can have observable dynamical consequences, including escape and search statistics \cite{tung2025EscapeHeterogeneousDiffusion, tung2026StochasticSearchSpacedependent}, and distinguishing conventions appears to require dynamical signatures such as mean-square displacements, survival probabilities, or exit times \cite{liu2026IdentifyingInterpretationTwodimensional}. Time-resolved measurements, from perturbation time courses or live-cell imaging, would add independent constraints and likely break many of the steady-state degeneracies identified here \cite{miles2025MechanisticInferenceStochastic}.

\section*{Acknowledgements}
CEM was supported by NSF/NIGMS (NIH) grant DMS-2451263 and NSF CAREER grant DMS-2545859. This work utilized the infrastructure for high-performance and high-throughput computing, research data storage and analysis, and scientific software tool integration built, operated, and updated by the Research Cyberinfrastructure Center (RCIC) at the University of California, Irvine (UCI), \url{https://rcic.uci.edu}. RZZ thanks the NVIDIA Academic Grant Program for the NVIDIA RTX PRO 6000 Blackwell GPU.

\section*{Generative AI declaration}
During the preparation of this work the authors used Claude Opus 4.8 (Anthropic) to assist with copy-editing the manuscript and implementing portions of the accompanying code. After using this tool, the authors reviewed and edited the content as needed and take full responsibility for the content of the publication.

\appendix
\crefalias{section}{appendix}
\crefalias{subsection}{appendix}

\section{Proof of It\^o point-source identifiability}
\label{app:proof-thm4}

The Dirichlet Green's function on $[0, L]$, satisfying $-G''(x, z) = \delta(x - z)$ with $G(0, z) = G(L, z) = 0$, is
\begin{equation}
G(x, z) = \begin{cases} \frac{x(L-z)}{L} & x \leq z \\ \frac{z(L-x)}{L} & x > z, \end{cases} \qquad G(z, z) = \frac{z(L-z)}{L}.
\end{equation}

\begin{proof}
By \cref{lem:source-jump}, the auxiliary $f = \delta D\, u$ satisfies $f'' = -\delta b_0\,\delta(x - z)$, and the source corner (part~(ii)) fixes
\begin{equation}
\label{eq:constraint2}
\delta D(z) = -\frac{\delta b_0}{[u'](z)}, \qquad [u'](z) = -\frac{b_0}{D(z)},
\end{equation}
This quantity is nonzero for a nontrivial source. We treat the two boundary conditions separately. In each case, it remains to show that the boundary constraints and the source jump force $\delta b_0 = 0$.

\medskip\noindent\textbf{Dirichlet.}
By \cref{lem:source-jump}(iii), $f(0) = f(L) = 0$, and together with the delta source this determines $f$ completely as $f(x) = \delta b_0\, G(x, z)$, with $G$ the Dirichlet Green's function introduced above. Evaluating at $x = z$,
\begin{equation}
\label{eq:constraint1}
\delta D(z) \cdot u(z) = \delta b_0 \cdot G(z, z).
\end{equation}
The two constraints \eqref{eq:constraint1} and \eqref{eq:constraint2} can be written as the homogeneous system
\[
\begin{pmatrix}
u(z) & -G(z,z) \\
[u'](z) & 1
\end{pmatrix}
\begin{pmatrix}
\delta D(z) \\
\delta b_0
\end{pmatrix}
=0.
\]
A nonzero perturbation $(\delta D(z), \delta b_0)$ therefore requires the determinant to vanish:
\begin{equation}
u(z) + G(z,z) \cdot [u'](z) = 0.
\end{equation}
This determinant can be evaluated exactly. Writing $V = Du$ for the unperturbed product, the steady-state equation is $V'' = \gamma u - b_0\,\delta(x-z)$ with $V(0) = V(L) = 0$, so that
\begin{equation}
V(x) = D(x)u(x) = b_0\,G(x,z) - \gamma \int_0^L G(x,y)\,u(y)\,\diff y.
\end{equation}
Evaluating at $x = z$ and using $[u'](z) = -b_0/D(z)$,
\begin{equation}
u(z) + G(z,z)\,[u'](z) = \frac{1}{D(z)}\Big[ D(z)u(z) - b_0\,G(z,z) \Big] = -\frac{\gamma}{D(z)} \int_0^L G(x,z)\,u(x)\,\diff x,
\end{equation}
which is strictly negative whenever $\gamma > 0$, since $G > 0$ and $u > 0$ in the interior. The determinant is therefore nonzero, so $\delta b_0 = 0$; no genericity caveat is needed once $\gamma > 0$. With $\delta b_0 = 0$, the equation $f'' = 0$ with $f(0) = f(L) = 0$ forces $f \equiv 0$, and since $u > 0$ in the interior, $\delta D = f/u \equiv 0$.

\medskip\noindent\textbf{Neumann.}
For reflecting boundaries, integrate the steady-state PDE over $[0, L]$; the zero-flux conditions $\partial_x(Du) = 0$ at both ends give the global mass balance $b_0 = \gamma \int_0^L u\,\diff x$, which is fixed by the observed profile $u$, so $\delta b_0 = 0$ immediately. The ambiguity equation is then $f'' = 0$ with the perturbed flux $f' = (\delta D\,u)'$ vanishing at both endpoints, forcing $f \equiv C$ constant, hence $\delta D = C/u$. Differentiating, $\delta D' = -C u'/u^2$ inherits the jump in $u'$ at the source:
\begin{equation}
[\delta D'](z) = -C\,\frac{[u'](z)}{u(z)^2} = \frac{C\,b_0}{D(z)\,u(z)^2}.
\end{equation}
For $C \neq 0$ this is a genuine slope discontinuity in $\delta D$ at $z$, which \cref{ass:regularity} forbids; therefore $C = 0$ and $\delta D \equiv 0$. In both cases the system is structurally identifiable.
\end{proof}

\section{Sketch for multi-source shared-strength constraints}
\label{app:proof-thm5}

\begin{proof}[Proof sketch]
Consider $N \geq 2$ point sources at distinct locations $z_1, \ldots, z_N \in (0, L)$ with shared intensity $b_0$:
\begin{equation}
\LL_0 u - \gamma u + b_0 \sum_{i=1}^N \delta(x - z_i) = 0.
\end{equation}
Suppose another admissible triple $(\widetilde D, \widetilde b_0, \widetilde\kappa)$ produces the same profile $u$. Write $f = \delta D\, u$, $\delta b_0 = \widetilde b_0 - b_0$, and $\delta\kappa = \widetilde\kappa - \kappa$. All $N$ sources share the strength $b_0$, hence the single perturbation $\delta b_0$. By \cref{lem:source-jump}, $f$ is piecewise linear between the $z_i$ with equal corners $[f'](z_i) = -\delta b_0$, and its source values are read off from the observed profile up to that one scale:
\begin{equation}
\label{eq:multi-source-jump}
f(z_i) = \delta b_0\, q_i, \qquad q_i := \frac{D(z_i)\,u(z_i)}{b_0} = -\frac{u(z_i)}{[u'](z_i)}.
\end{equation}

The inherited Robin conditions $\partial_n f + \delta\kappa\, u = 0$ at the two endpoints involve the single permeability perturbation $\delta\kappa$, so they fix the endpoint slopes of $f$ in terms of $\delta b_0$ alone. Eliminating $\delta\kappa$ between them gives $N\delta b_0 = \delta\kappa\,(u(0)+u(L))$, and the leftmost slope is $f'(0^+) = m\,\delta b_0$ with $m := N u(0)/(u(0)+u(L))$.

Suppose $\delta b_0 \neq 0$. Then $f/\delta b_0$ is the piecewise-linear function with slope $m$ on $(0,z_1)$, dropping by one at each source, and one free intercept $a$. Evaluating at the sources,
\begin{equation}
\frac{f(z_i)}{\delta b_0} = a + m\, z_i - \sum_{j<i}(z_i - z_j) = q_i, \qquad i = 1,\dots,N.
\end{equation}
These $N$ equations in the one constant $a$ leave $N-1$ compatibility relations among the observed $q_i$ and the source spacings. The first two sources already require
\begin{equation}
q_2 - q_1 = (m-1)(z_2 - z_1).
\end{equation}
Such relations are not imposed by the PDE or the boundary condition, so they hold only for pathological source locations and profiles.

When they fail, $\delta b_0 = 0$, and $N\delta b_0 = \delta\kappa\,(u(0)+u(L))$ forces $\delta\kappa = 0$. With no source corners and zero endpoint slopes, $f$ is constant, and $f(z_i) = \delta b_0\, q_i = 0$ makes that constant zero. Hence $f \equiv 0$ and $\delta D \equiv 0$.
\end{proof}

\section{Sketch for downstream product constraints}
\label{app:proof-thm6}

\begin{proof}[Proof sketch]
Consider the two-species system with precursor $u$ and product $v$:
\begin{align}
\LL_0^{(u)} u - (\gamma_u + k) u + b_0 \delta(x - z) &= 0, \label{eq:precursor-pde} \\
\LL_0^{(v)} v - \gamma_v v + k u(x) &= 0, \label{eq:product-pde}
\end{align}
where $\LL_0^{(u)} u = (D_u u)''$ and $\LL_0^{(v)} v = (D_v v)''$ are the It\^o diffusion operators for each species. We assume the structural constraint $D_v = \rho D_u$ for some known constant $\rho > 0$.

Suppose $(\tilde{D}_u, \tilde{b}_0, \tilde{\kappa})$ is an alternative parameter set producing the same observed densities $u(x)$ and $v(x)$. Define $\delta D_u = \tilde{D}_u - D_u$, $\delta b_0 = \tilde{b}_0 - b_0$, and $\delta \kappa = \tilde{\kappa} - \kappa$.

For the precursor, \cref{lem:source-jump} gives the auxiliary $f_u = \delta D_u\, u$ with $f_u'' = -\delta b_0\,\delta(x-z)$, piecewise linear with a corner at $z$; with the boundary unknown this leaves a one-parameter family (\cref{thm:robin-nonident}). The product equation supplies the missing constraint, and is the part special to this case. Subtracting \cref{eq:product-pde} for the original and perturbed parameters (noting that the source $ku(x)$ is \emph{observed} and hence identical in both):
\begin{equation}
\label{eq:product-ambiguity}
(\delta D_v \cdot v)'' = 0.
\end{equation}
Under the structural assumption $D_v = \rho D_u$, we have $\delta D_v = \rho \, \delta D_u$, so
\begin{equation}
(\delta D_u \cdot v)'' = 0.
\end{equation}

Define $f_u = \delta D_u\, u$ and $\psi_v = \delta D_u\, v$. The precursor equation makes $f_u$ piecewise linear with one derivative jump at $z$, fixed by $\delta b_0$. The product equation makes $\psi_v$ globally linear, since $\psi_v'' = 0$ with no source, so $\psi_v(x) = ax + b$. Because the same diffusivity perturbation appears in both species, $\delta D_u = f_u/u = \psi_v/v$, hence
\begin{equation}
f_u(x) = \frac{(ax + b)\, u(x)}{v(x)}.
\end{equation}
For $f_u$ to be an admissible precursor perturbation, the right-hand side must be linear on each of $(0,z)$ and $(z,L)$, carry the prescribed derivative jump at $z$, and meet the inherited Robin conditions. A nonzero ambiguity therefore requires a nonzero affine $\ell(x) = ax + b$ with
\begin{equation}
\bigl(\ell(x)\, u(x)/v(x)\bigr)'' = 0 \quad \text{on } (0,z) \text{ and } (z,L),
\end{equation}
together with the source-jump and boundary matching. This asks the observed ratio $u/v$ to turn a global affine function into a kinked piecewise-affine one with exactly the right matching data, which holds only for exceptional pairs $(u,v)$. Otherwise no such nonzero $\ell$ exists, so $\psi_v \equiv 0$, and since $v > 0$ in the interior $\delta D_u \equiv 0$. The precursor jump then gives $\delta b_0 = 0$, and the Robin boundary condition gives $\delta\kappa = 0$.
\end{proof}

\section{Numerical methods}
\label{sec:numerics}

We compare three reconstruction methods that differ in how they represent the steady-state solution and enforce the PDE, boundary condition, and source jump. DTO uses a differentiable finite-volume solve, PINN uses neural-network residual penalties, and BiLO uses a neural local solution operator. All three methods use the same Poisson point-process likelihood and regularization. In each case we eliminate the source strength analytically, so the remaining numerical optimization is over $D(x)$. The code that produces these reconstructions and the recovery figures is available at \url{https://github.com/Miles-group-code/AlphaDiffusivityNet}.

\subsection{Variable projection}

Throughout, $\hat u$, $u$, and $v[D]$ denote responses to a unit point source in the forward solves and physics residuals; they are compared to the observed point pattern only through $\mathcal{L}_{\mathrm{proj}}$ (defined below), after the source intensity $b$ has been projected out.

\paragraph{Data loss and source projection.}
The data loss is the negative log of the Poisson likelihood \eqref{eq:likelihood} for $m_{\mathrm{obs}}$ snapshots containing $N_{\mathrm{obs}}$ points in total, $X = \{x_1, \dots, x_{N_{\mathrm{obs}}}\}$,
\[\mathcal{L}_{\mathrm{data}}[u] = m_{\mathrm{obs}}\int_{\Omega} u(x) \, \diff x - \sum_{j=1}^{N_{\mathrm{obs}}} \log \left( u(x_j) \right).\]
For a candidate $D$, let $\hat{u}_D$ solve the steady-state equation \eqref{eq:steady-state} with operator $\mathcal{L}_\alpha$ from \eqref{eq:alpha-operator}, a unit point source at $z$, and the chosen boundary condition $\mathcal{B}[\hat{u}]=0$; here $\mathcal{B}$ is the boundary operator for that condition (Dirichlet, Neumann, or Robin; see \eqref{eq:alpha-flux} and \eqref{eq:robin-bc}). Since the equation is linear in $u$, the density at source strength $b$ is $u = b\hat{u}_D$. Substituting $u = b\hat{u}$ into the data loss,
\[\mathcal{L}_{\mathrm{data}}[b\hat{u}] = m_{\mathrm{obs}}\, b \int_{\Omega} \hat{u}(x) \, \diff x - \sum_{j=1}^{N_{\mathrm{obs}}} \left( \log b + \log \hat{u}(x_j) \right),\]
and setting $\partial \mathcal{L}_{\mathrm{data}} / \partial b = 0$ gives the optimal intensity in closed form,
\[b^* = \frac{N_{\mathrm{obs}}}{m_{\mathrm{obs}}\int_{\Omega} \hat{u}(x) \, \diff x}.\]
We take $m_{\mathrm{obs}}=1$ in all experiments below.
Substituting $b^*(\hat u)$ gives the projected loss $\mathcal{L}_{\mathrm{proj}}[\hat u] = \mathcal{L}_{\mathrm{data}}[b^*(\hat u)\hat u]$, up to an additive constant independent of $\hat u$,
\[
\mathcal{L}_{\mathrm{proj}}[\hat u]
=
N_{\mathrm{obs}}\log\!\left(\int_\Omega \hat u(x)\,\diff x\right)
-
\sum_{j=1}^{N_{\mathrm{obs}}}\log \hat u(x_j).
\]
Eliminating $b$ this way is the variable-projection step \cite{golub2003SeparableNonlinearLeast}, and leaves an optimization over the diffusion shape alone. Starting from an initial field $D^{(0)}$ with unit response $\hat{u}^{(0)}$, we alternate
\begin{enumerate}
  \item the analytic projection $b^{(k+1)} = N_{\mathrm{obs}} \big/ \big(m_{\mathrm{obs}}\int_{\Omega} \hat{u}^{(k)}(x) \, \diff x\big)$, and
  \item the constrained update
  \begin{equation}
    \begin{aligned}
      &  D^{(k+1)}(x) = \arg\min_{D}  \mathcal{L}_{\mathrm{proj}}[\hat{u}] \\
      \text{s.t.} &
      \begin{cases}
      \quad \mathcal{L}_{\alpha,D} \hat{u} - \gamma\hat{u} = - \delta(x - z)\\
      \quad \mathcal{B}[\hat{u}] = 0.
      \end{cases}
    \end{aligned}
  \end{equation}
\end{enumerate}
The three methods below differ in how they carry out step 2.

\paragraph{Shared regularization.}
For a unit-source response $q$, define the regularized objective
\[
\mathcal{J}[q, D] = \mathcal{L}_{\mathrm{proj}}[q] + \lambda_{\mathrm{smooth}}\,\mathcal{R}_{\mathrm{smooth}}(D) + \lambda_{\mathrm{scale}}\,\mathcal{R}_{\mathrm{scale}}(D, \bar D),
\]
with $q = \hat u_D$ for DTO, $q = u$ for PINN, and $q = v[D]$ for BiLO. The diffusion-field regularizers are
\[\mathcal{R}_{\mathrm{smooth}}(D) = \int_\Omega \big(D'(x)\big)^2 \, \diff x \quad (H^1),\]
\[\mathcal{R}_{\mathrm{scale}}(D, \bar D) = \int_\Omega \big(D(x) - \bar D\big)^2 \, \diff x,\]
evaluated as grid averages on the solver or collocation grid. The constant $\bar D$ comes from the scalar pre-fit, so $\mathcal{R}_{\mathrm{scale}}$ gently anchors the overall magnitude of $D$ near a data-consistent value to avoid degenerate solutions in the optimization.

\paragraph{Scalar pre-fit.}
When the diffusion field $D(x)$ is a constant $D$, the $\alpha$-operator collapses to $D\partial_{xx}$ for every convention, and the unit-source response solves
\[
D \hat u''(x) - \gamma \hat u(x) = -\delta(x - z).
\]
The objective then reduces to a function of two scalars, the constant $D$ and the source strength $b$, which we minimize numerically by alternating a projection step on $b$ with a gradient step on $D$. The resulting fit $\bar D$ provides the anchor in $\mathcal{R}_{\mathrm{scale}}(D, \bar D)$ and the initial guess for the full field $D(x)$.

\paragraph{Interface form of the point source.}
The residual-based neural methods (PINN and BiLO) impose the singular forcing through the equivalent interface jump condition \cite{levequeImmersedInterfaceMethod1994}:
\begin{equation} \label{eq:interface}
  \begin{cases}
    \LL_\alpha \hat{u} - \gamma\hat{u} = 0\\
    [\hat{u}](z)= 0\\
    D(z)[\hat{u}'](z) = -1\\
    \mathcal{B}[\hat{u}] = 0
  \end{cases}
\end{equation}
where $+$ and $-$ denote limits from the right and left of $z$. DTO and the synthetic-data solver instead use a discrete approximation of the point source on the finite-volume grid.

\subsection{Discretize-then-optimize (DTO)}

The discretize-then-optimize approach reconstructs $D(x)$ by differentiating through a numerical steady-state solver, an instance of optimizing a discrete loss \cite{karnakov2023SolvingInverseProblems}. We discretize $[0,L]$ on a grid with spacings $h_i = x_{i+1}-x_i$ and approximate the flux-form operator with a conservative finite-volume scheme, using the harmonic mean of $D^\alpha$ at cell interfaces,
\[A_{i+1/2} = \frac{2\,D_i^\alpha D_{i+1}^\alpha}{D_i^\alpha + D_{i+1}^\alpha},\]
which preserves flux continuity across interfaces. The resulting tridiagonal system is solved by a differentiable Thomas algorithm, allowing gradients of the projected data loss to backpropagate through the solver.

We parameterize $D$ directly as $D(x) = \theta(x) + D_{\min}$, with $\theta$ clamped to $\theta \ge 0$ after each step so that $D \ge D_{\min} > 0$. This keeps $\partial D/\partial\theta = 1$, avoiding the vanishing Jacobian of $D = \exp(\theta)$ or $\mathrm{softplus}(\theta)$ as $D \to 0$, which stalls optimization in low-diffusivity regions. DTO then minimizes the shared objective $\mathcal{J}[\hat u_D, D]$, where $\hat u_D$ is the unit-source response from the differentiable finite-volume solve, so the physics is enforced through the discrete forward problem rather than through residual penalties.

\subsection{Physics-informed neural networks (PINNs)}

PINNs use neural networks as mesh-free approximations for $u(x)$ and $D(x)$ \cite{raissiPhysicsinformedNeuralNetworks2019,cuomoScientificMachineLearning2022}. Instead of solving the steady-state PDE exactly, a PINN enforces it as a soft penalty: the squared PDE, jump, and boundary residuals are summed with the projected data loss into a single weighted objective, minimized over $u$ and $D$ together. We define these terms next.

\paragraph{Cusp-capturing network}
Standard physics-informed neural networks struggle to resolve the discontinuity in the gradient of the density $\hat{u}(x)$ at the source location $z$ \cite{tsengCuspcapturingPINNElliptic2023}. Instead of learning $\hat{u}(x)$ directly as a function of spatial coordinates, we parameterize the solution as $\tilde{u}(x, \phi)$, where $\phi = |x - z|$.
This transformation keeps the function continuous at the source while letting its spatial derivative jump. With $+$ denoting the right limit and $-$ the left limit, the one-sided derivatives are
\[
\hat{u}'(z^\pm) = \partial_x \tilde{u}(z, 0) \pm \partial_\phi \tilde{u}(z, 0).
\]
Consequently, the jump in the derivative is
\[[\hat{u}'](z) = 2\partial_\phi \tilde{u}(z, 0).\] 

We define the residual of the jump condition $\mathcal{R}_{\mathrm{jump}}$ 

\[\mathcal{R}_{\mathrm{jump}}[u, D] = 2D(z)\partial_\phi \tilde{u}(z, 0) + 1.\]

\paragraph{Loss functions}

For notational simplicity, we consolidate the governing physics (PDE, jump, and boundary conditions) into a unified residual operator
$\boldsymbol{\mathcal{R}}[u, D]$ as:
\begin{equation}
  \boldsymbol{\mathcal{R}}[u, D] = \begin{bmatrix}
    \mathcal{R}_{\mathrm{PDE}}[u, D] \\
    \mathcal{R}_{\mathrm{jump}}[u, D] \\
    \mathcal{R}_{\mathrm{bc}}[u, D]
  \end{bmatrix}
\end{equation}
where $\mathcal{R}_{\mathrm{PDE}}[u, D](x) = \mathcal{L}_{\alpha, D} u(x) - \gamma u(x)$, for $x \in \Omega \setminus \{z\}$,
$\mathcal{R}_{\mathrm{jump}}[u, D] = 2D(z)\partial_\phi \tilde{u}(z, 0) + 1$, and $\mathcal{R}_{\mathrm{bc}}[u, D] = \mathcal{B}[u](x)$ for $x \in \partial \Omega$.
Physical consistency is then the condition $\boldsymbol{\mathcal{R}}[u, D] = \mathbf{0}$.

The PDE loss averages the squared residual over the interior collocation points $x_i \neq z$ (the source is handled by the jump residual), the jump loss is the squared source residual, and the boundary loss is the mean of the two endpoint residuals,
\[
\begin{aligned}
\mathcal{L}_{\mathrm{res}}[u, D] &= \frac{1}{N_c} \sum_{i=1}^{N_c} \left( \mathcal{R}_{\mathrm{PDE}}[u, D](x_i) \right)^2, \\
\mathcal{L}_{\mathrm{jump}}[u, D] &= \left( \mathcal{R}_{\mathrm{jump}}[u, D] \right)^2, \\
\mathcal{L}_{\mathrm{bc}}[u, D] &= \frac{1}{2} \mathcal{B}[u](0)^2 + \frac{1}{2} \mathcal{B}[u](L)^2 .
\end{aligned}
\]
In both neural-network methods the boundary penalty is used only for Neumann conditions. Dirichlet conditions are imposed exactly by multiplying the network output by $x(L-x)$, so they need no boundary penalty. The full PINN objective adds these to the projected data loss and regularization,
\[
\mathcal{L}_{\mathrm{PINN}}[u, D] = \mathcal{J}[u, D] + w_{\mathrm{res}}\mathcal{L}_{\mathrm{res}}[u, D] + w_{\mathrm{jump}}\mathcal{L}_{\mathrm{jump}}[u, D] + w_{\mathrm{bc}}\mathcal{L}_{\mathrm{bc}}[u, D],
\]
minimized over $u$ and $D$ jointly.

\paragraph{Implementation}

We parameterize the density as a neural network, $u \approx u_{\theta_u}(x)$, and the diffusivity as the exponential of a neural network, $D_{\theta_D}(x) = \exp\!\left(\mathrm{NN}_{\theta_D}(x)\right)$, which keeps $D$ positive automatically. The spatial derivatives entering the residuals ($u'$, $u''$, and the derivatives of $D$) are obtained by automatic differentiation \cite{paszkePyTorchImperativeStyle2019}.

\subsection{Bilevel local operator learning (BiLO)}

Standard PINNs minimize a weighted sum of data loss and PDE loss, so the fit depends on the relative weights. BiLO \cite{zhangBiLOBilevelLocal2026,zhangBayesianBiLOBilevel2026} removes this data-versus-physics trade-off by training a local solution operator that satisfies the PDE for the current $D(x)$ and has the correct local sensitivity to changes in $D$. Let the local diffusion state be $s = (D, D') \in \mathbb{R}^2$, and let $v(x; s)$ denote the local operator, so that
\[
\hat{u}(x) = v\big(x;\, s(x)\big), \qquad s(x) = \big(D(x),\, D'(x)\big).
\]
We write this as $v[D](x) = v(x; s(x))$ when emphasizing the dependence on $D$.

The operator is trained so that, at the current field, the physics residual vanishes and is locally stationary in the state variables:
\[
\boldsymbol{\mathcal{R}}[v[D], D] = \mathbf{0},
\qquad
\nabla_{s}\boldsymbol{\mathcal{R}}[v[D], D] = \mathbf{0},
\]
where $\nabla_{s} = (\partial_D, \partial_{D'})$ differentiates the free state and is then evaluated at $s = s(x)$.

\paragraph{Local-operator loss}

The local operator reuses the PINN physics losses $\mathcal{L}_{\mathrm{res}}$, $\mathcal{L}_{\mathrm{jump}}$, $\mathcal{L}_{\mathrm{bc}}$ (now evaluated on $v[D]$) and adds a residual-gradient (sensitivity) term that enforces $\nabla_s\boldsymbol{\mathcal{R}}=\mathbf{0}$:
\[
\mathcal{L}_{\mathrm{rgrad}}[v,D]
= \frac{1}{N_c}\sum_{i=1}^{N_c}\big\|\nabla_s \mathcal{R}_{\mathrm{PDE}}[v,D](x_i)\big\|^2
+ \big\|\nabla_s \mathcal{R}_{\mathrm{jump}}[v,D]\big\|^2
+ \tfrac{1}{2}\!\!\sum_{x_b\in\{0,L\}}\!\!\big\|\nabla_s \mathcal{R}_{\mathrm{bc}}[v,D](x_b)\big\|^2 .
\]
The local-operator loss is then
\[
\mathcal{L}_{\mathrm{op}}[v,D]
= \mathcal{L}_{\mathrm{res}} + w_{\mathrm{jump}}\mathcal{L}_{\mathrm{jump}}
+ w_{\mathrm{bc}}\mathcal{L}_{\mathrm{bc}}
+ w_{\mathrm{rgrad}}\,\mathcal{L}_{\mathrm{rgrad}} .
\]

\paragraph{Bilevel optimization}
BiLO solves the bilevel problem
\[
\begin{aligned}
D^* &= \arg\min_{D} \mathcal{J}[v^*[D], D], \\
v^*[D] &= \arg\min_{v} \mathcal{L}_{\mathrm{op}}[v[D], D],
\end{aligned}
\]
in which the lower level trains the local operator for the current $D$, and the upper level updates $D$ to minimize the projected data loss evaluated on that operator. We parameterize the diffusion field and the operator as neural networks, with weights $\theta_D$ and $\theta_v$. Because $\mathcal{J}$ depends on $\theta_D$ through the local operator $v[D]$, its gradient flows through $v$ by the chain rule. At each iteration we take one gradient step on the operator, and a gradient step on the diffusion field only when the lower-level loss $\mathcal{L}_{\mathrm{op}}$ is below a tolerance $\epsilon$. This keeps every update of $D$ based on an operator that is already consistent with the physics, approximating the bilevel structure without solving the lower level to convergence.

\paragraph{Implementation}
All derivatives are computed by automatic differentiation in PyTorch. The diffusion network maps $x$ to $D$ and $D'$, with $D$ made positive by the same exponential map as the PINN, and the operator gives $v = v(x, D, D')$; differentiating in $x$ gives $v'$ and $v''$ for the residual $\mathcal{R} = F(v'', v', v, D, D')$. Autograd gives $\partial_D \mathcal{R}$ directly. For $\partial_{D'}\mathcal{R}$, however, the input $D'$ and the chain-rule factor $D'$ in the total derivative $v' = v_x + v_D\, D' + v_{D'}\, D''$ are separate graph nodes. Differentiating only with respect to the input node misses the product-rule terms $\partial_{v'}\mathcal{R}\, v_D$ and $2\,\partial_{v''}\mathcal{R}\, v_D'$. We add these two terms explicitly.

\subsection{Synthetic data}
Each experiment fixes a ground-truth pair $(D_{\mathrm{true}}, b_{\mathrm{true}})$ and a source location $z$. We solve the steady-state equation \eqref{eq:steady-state} on a grid with the same conservative finite-volume scheme used by DTO and form the true density $u_{\mathrm{true}} = b_{\mathrm{true}}\hat{u}$. Each snapshot is a draw from an inhomogeneous Poisson point process with intensity $u_{\mathrm{true}}$: the number of molecules is drawn as $N\sim\mathrm{Poisson}\!\big(\int_\Omega u_{\mathrm{true}}\,\diff x\big)$ and their positions by inverse-CDF sampling from the piecewise-linear density on that grid; the experiments reported here use a single snapshot. Because the generative model is exactly the observation model of \cref{sec:framework}, the data loss is its true negative log-likelihood.

\subsection{Experimental setup}
All experiments use the unit interval $[0,1]$, decay rate $\gamma = 5$, a single source at $z = 0.5$, and a solver and collocation grid of $n_{\mathrm{res}} = 201$ points. The neural-network methods (PINN and BiLO) represent $D(x)$ by a width-64, depth-3 MLP and the density or local operator by a width-256 modified MLP, both using random-Fourier-feature input embeddings \cite{wangEigenvectorBiasFourier2021}, which map the input through $\phi(x)=\sqrt{2/m}\,\sin(\omega_k x+\beta_k)$ with random frequencies $\omega_k\sim\mathcal{N}(0,\sigma^2)$ and phases $\beta_k\sim\mathrm{Unif}[0,2\pi]$ at scale $\sigma=5$. Training uses a 500-step constant-$D$ scalar pre-fit, then a 2000-step pretraining phase that, following \cite{zhangBiLOBilevelLocal2026}, warm-starts the networks to satisfy the physics for the constant pre-fit diffusivity before the data loss is switched on, and finally Adam with a cosine schedule, early stopping, and at most 7000 steps. The physical parameters of each test (stochastic convention $\alpha$, boundary type, $D_{\mathrm{true}}(x)$, and source strength $b_{\mathrm{true}}$) are stated in the corresponding figure captions; the per-method loss weights and learning rates are given in the accompanying code.

\printbibliography

\end{document}